\newcommand{\vs}[1]{\boldsymbol{#1}}
\newcommand{\vvs}[1]{\underline{\boldsymbol{#1}}}
\providecommand{\abs}[1]{\vert#1\vert}
\newcommand{\cmpl}[1]{\overline{#1}}
\DeclareMathOperator{\Lin}{Lin}
\DeclareMathOperator{\Conv}{Conv}
\DeclareMathOperator{\downset}{\downarrow}
\DeclareMathOperator{\upset}{\uparrow}
\newcommand{\inlineiff}{\,\Leftrightarrow\,}
\newcommand{\inlinethen}{\,\Rightarrow\,}
\newcommand{\displayiff}{\;\;\Longleftrightarrow\;\;}
\newcommand{\displaythen}{\;\;\Longrightarrow\;\;}
\newcommand{\set}[1]{\{ #1 \}}
\newcommand{\sset}[2]{\{\, #1 \,\vert\, #2 \,\}}
\newcommand{\bigsset}[2]{\bigl\{\, #1 \,\big\vert\, #2 \,\bigr\}}
\newcommand{\Bigsset}[2]{\Bigl\{\, #1 \,\Big\vert\, #2 \,\Bigr\}}
\newcommand{\struct}[1]{( #1 )}
\newcommand{\text}[1]{\mathrm{#1}} 
\newtheorem{prop}{Proposition}
\newtheorem{lem}[prop]{Lemma}
\newenvironment{proof}{\paragraph{\textit{Proof:}}}{\hfill$\square$}
\newenvironment{proofl}[1]{\paragraph{\textit{Proof #1:}}}{\hfill$\square$}
\begin{document}
\title{The classification of multipartite quantum correlation}
\author{{Sz}il{\'a}rd {Sz}alay}
\address{Strongly Correlated Systems ``Lend{\"u}let'' Research Group,\\
Wigner Research Centre for Physics of the Hungarian Academy of Sciences,\\
29-33, Konkoly-Thege Mikl{\'o}s {\'u}t, Budapest, H-1121, Hungary}
\ead{szalay.szilard@wigner.mta.hu}
\date{\today}

\begin{abstract}
In multipartite entanglement theory,
the partial separability properties have an elegant, yet complicated structure,
which becomes simpler in the case when multipartite correlations are considered.
In this work, we elaborate this,
by giving 
necessary and sufficient conditions for the existence and uniqueness of the class of a given class-label,
by the use of which we work out the structure of the classification for some important particular cases,
namely, 
for the finest classification,
for the classification based on $k$-partitionability and $k$-producibility,
and for the classification based on the atoms of the correlation properties.
\end{abstract}

\pacs{
03.65.Fd, 
03.65.Ud, 
03.67.Mn  
}


\section{Introduction}
\label{sec:Intro}

In quantum systems, nonclassical forms of correlations arise, 
which, although being simple consequences of the Hilbert space structure of quantum mechanics,
represent a longstanding challenge for the classically thinking mind.
\emph{Pure states} of classical systems are always uncorrelated;
correlations in pure states are of quantum origin,
this is what we call entanglement~\cite{Schrodinger-1935a,Horodecki-2009}.
The correlation in \emph{mixed states} of classical systems can be induced by classical communication;
correlations in mixed states which are not of this kind are of quantum origin, 
this is what we call entanglement~\cite{Werner-1989,Horodecki-2009}.

\emph{Bipartite} systems can either be uncorrelated or correlated,
and either be separable or entangled,
while for \emph{multipartite} systems,
the \emph{partial separability} properties have a complicated, yet elegant structure
\cite{Dur-1999,Dur-2000,Acin-2001,Nagata-2002,Seevinck-2008,
Szalay-2012,Szalay-2015b,Szalay-2017}.
Considering the \emph{partial correlation} properties of multipartite systems~\cite{Szalay-2017},
the structure of the classification~\cite{Szalay-2015b} becomes simpler.
In the present work, we elaborate this,
by giving
necessary and sufficient conditions 
for the existence and uniqueness of the class of a given class-label,
by the use of which we elaborate the structure of the classification in some important particular cases.

Our work is motivated by that
quantum correlation and entanglement are of central importance 
in many fields of research in quantum physics nowadays,
first of all
in quantum information theory~\cite{Nielsen-2000,Petz-2008,Wilde-2013}
and in strongly correlated manybody systems~\cite{Amico-2008,Legeza-2004,Szalay-2015a}.
Especially in the latter case,
correlation might be more important than entanglement,
since in physical properties of manybody systems,
the entire correlation is what matters, not only its entanglement part.
(The two coincide only for pure states, so almost never inside subsystems.)
Fortunately, this also meets the claim of practice, since
the measures of multipartite correlations are feasible to evaluate~\cite{Szalay-2015b,Szalay-2017},
while this is not the case for the measures of multipartite entanglement~\cite{Plenio-2007,Eltschka-2014,Szalay-2015b}.

The organization of the paper is as follows.
In section~\ref{sec:recall} we recall the structure of multipartite correlation and entanglement,
ending in the definitions of the partial correlation classes,
which are labelled by a natural labelling scheme.
This formalism allows us to describe all the possible partial correlation based classifications.
Because of the structure of the partial correlation properties,
this labelling scheme, while being conjectured to be faithful for partial entanglement,
is not faithful for partial correlations:
on the one hand, there are labels which define empty classes,
on the other hand, different labels may lead to the same partial correlation class.
In section~\ref{sec:struct} we elaborate this, by giving general necessary
and sufficient conditions for the existence and uniqueness of the class of a given label.
In section~\ref{sec:corrclassxmpl} we apply our results to some important classifications.
The natural way of the description of the classification is 
the use of the tools of elementary set and order theory (in the finite setting)~\cite{Davey-2002,Roman-2008}.
For the convenience of the reader, 
we recall the elements needed in~\ref{appsec:posets.defs}.
The proofs of some auxiliary results are given in appendices.

\section{Multipartite correlation and entanglement}
\label{sec:recall}

In this section
we briefly recall and slightly extend the results
about the structure of multipartite correlations and entanglement~\cite{Szalay-2015b,Szalay-2017}.
When we go beyond our previous works (\cite{Szalay-2015b} and the supplementary material of \cite{Szalay-2017}),
we give the proofs inline, or in appendices.

\subsection{Level 0: subsystems}
\label{sec:recall.L0}

Let $L=\set{1,2,\dots,n}$ be the set of the labels of the \emph{elementary subsystems.}
All the \emph{subsystems} are then labelled by subsets $X\subseteq L$,
the set of which, $P_0=2^L$, naturally possesses
a Boolean lattice structure with respect to the inclusion $\subseteq$. 
For each elementary subsystem $i\in L$,
we have finite dimensional Hilbert spaces $\mathcal{H}_i$ associated with it ($1<\dim\mathcal{H}_i<\infty$);
from these, the Hilbert space associated with
every subsystem $X\in P_0$ is
$\mathcal{H}_X = \bigotimes_{i\in X}\mathcal{H}_i$.
The \emph{state} of the subsystem $X\in P_0$
is given by a density operator (positive semidefinite operator of trace $1$)
acting on $\mathcal{H}_X$;
the set of the states of subsystem $X$ is denoted with $\mathcal{D}_X$.

\subsection{Level~I: partitions}
\label{sec:recall.LI}

For handling the different possible splits of a composite system into subsystems,
we need to use the mathematical notion of \emph{partition} of the system $L$,
which are sets of subsystems
$\xi=\set{X_1,X_2,\dots,X_{\abs{\xi}}}$, 
where the \emph{parts} $X\in\xi$ are nonempty disjoint subsystems,
for which $\cup\xi:=\bigcup_{X\in\xi}X=L$.
The set of the partitions of $L$ is denoted with $P_\text{I}$
(its size is given by the \emph{Bell numbers}~\cite{oeisA000110}),
it possesses a lattice structure with respect to the \emph{refinement} $\preceq$,
which is the natural partial order over the partitions,
defined as
$\upsilon\preceq\xi$ if for all $Y\in\upsilon$ there is an $X\in\xi$ such that $Y\subseteq X$.
(For illustration, see figure~\ref{fig:Ps3}.)
\begin{figure}\centering
\includegraphics{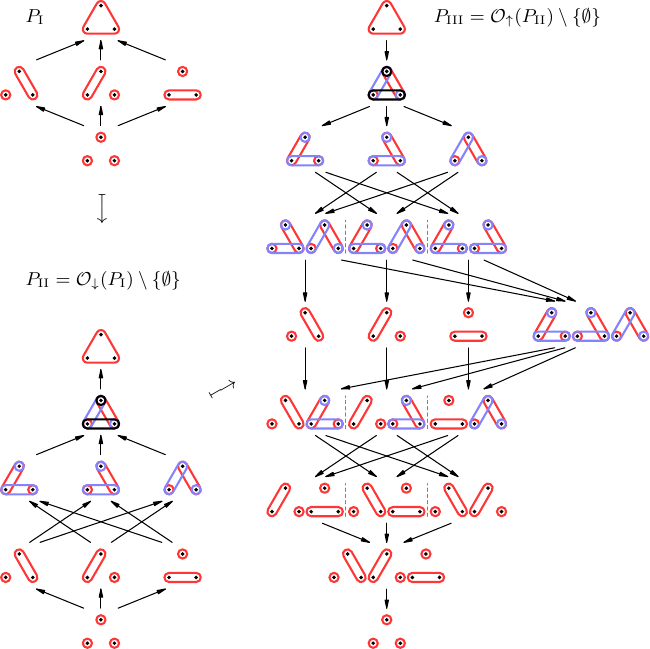}
\caption{The lattices of the three-level structure of multipartite correlation and entanglement for $n=3$.
Only the maximal elements of the down-sets of $P_\text{I}$ are shown (with different colors) in $P_\text{II}$, while
only the minimal elements of the up-sets of $P_\text{II}$ are shown (side by side) in $P_\text{III}$.
The partial orders $\preceq$ are represented by consecutive arrows.}\label{fig:Ps3}
\end{figure}

For a partition $\xi\in P_\text{I}$, the \emph{$\xi$-uncorrelated states} are those which are of the product form
with respect to the partition $\xi$,
\begin{equation}
\label{eq:DuncI}
\mathcal{D}_{\xi\text{-unc}}:=\Bigsset{\varrho_L\in\mathcal{D}_L}
{\forall X\in\xi, \exists \varrho_X\in\mathcal{D}_X:\varrho_L=\bigotimes_{X\in\xi}\varrho_X};
\end{equation}
the others are the \emph{$\xi$-correlated states}.
The \emph{$\xi$-separable states} are those, 
which are convex combinations (statistical mixtures) of $\xi$-uncorrelated ones,
\begin{equation}
\label{eq:DsepI}
\mathcal{D}_{\xi\text{-sep}}:=\Conv\mathcal{D}_{\xi\text{-unc}};
\end{equation}
the others are the \emph{$\xi$-entangled states}.
(The \emph{convex hull} of $A$ is $\Conv A = \sset{\sum_i p_i a_i}{a_i\in A, 0\leq p_i, \sum_i p_i = 1}$.)
These properties show the same lattice structure as the partitions~\cite{Szalay-2015b}, $P_\text{I}$,
that is,
\begin{equation}
\label{eq:DincI}
\upsilon\preceq\xi \displayiff 
\mathcal{D}_{\upsilon\text{-unc}}\subseteq\mathcal{D}_{\xi\text{-unc}},\;
\mathcal{D}_{\upsilon\text{-sep}}\subseteq\mathcal{D}_{\xi\text{-sep}}.
\end{equation}
(For the proof, see~\ref{appsec:qstates.I-II}.)
Note that
$\mathcal{D}_{\xi\text{-unc}}$ is closed under LO (local operations), and
$\mathcal{D}_{\xi\text{-sep}}$ is closed under LOCC (local operations and classical communications \cite{Bennett-1996b})~\cite{Szalay-2015b}.
(Here locality can be considered with respect to $\xi$,
but later this will be restricted to the finest split, $\bot=\sset{\{i\}}{i\in L}$.
The LO closedness, although not being proven in~\cite{Szalay-2015b}, is obvious.)

\subsection{Level~II: multiple partitions}
\label{sec:recall.LII}

The order isomorphism~\eref{eq:DincI} tells us that
if we consider states uncorrelated (or separable) with respect to a partition,
then we automatically consider states uncorrelated (or separable) with respect to every finer partition.
On the other hand, in multipartite entanglement theory,
it is necessary to handle mixtures of states uncorrelated with respect to different partitions~\cite{Acin-2001,Seevinck-2008,Szalay-2015b}.
Because of these, 
for the labelling of the different \emph{partial correlation and entanglement properties,}
we need to use the nonempty \emph{down-sets} of partitions (also called nonempty \emph{ideals} of partitions)~\cite{Szalay-2015b}, which are
sets of partitions
$\vs{\xi}=\set{\xi_1,\xi_2,\dots,\xi_{\abs{\vs{\xi}}}}\subseteq P_\text{I}$,
which are closed downwards with respect to $\preceq$
(that is, if $\xi\in\vs{\xi}$, then every $\upsilon\preceq\xi$ is also $\upsilon\in\vs{\xi}$).
The set of the nonempty partition ideals of $L$ is denoted with $P_\text{II} :=\mathcal{O}_\downarrow(P_\text{I})\setminus\{\emptyset\}$,
it possesses a lattice structure with respect to the standard inclusion
as partial order,
$\vs{\upsilon}\preceq\vs{\xi}$ if and only if $\vs{\upsilon}\subseteq\vs{\xi}$.
(For illustration, see figure~\ref{fig:Ps3}.)
Special cases are the ideals of \emph{$k$-partitionable} and \emph{$k'$-producible partitions},
$\vs{\mu}_k    := \bigsset{\mu\in P_\text{I}}{\abs{\mu}\geq k}$, 
$\vs{\nu}_{k'} := \bigsset{\nu\in P_\text{I}}{\forall N\in\nu: \abs{N}\leq k'}$,
for $1\leq k,k'\leq\abs{L}$,
that is, which contain partitions
where the number of parts is at least $k$,
and where the sizes of the parts are at most $k'$, respectively.
These form chains in the lattice $P_\text{II}$, as
$\vs{\mu}_l    \preceq \vs{\mu}_k \inlineiff l \geq k$, and 
$\vs{\nu}_{l'} \preceq \vs{\nu}_{k'}   \inlineiff l'\leq k'$.

For an ideal $\vs{\xi}\in P_\text{II}$, the \emph{$\vs{\xi}$-uncorrelated states} are those which are $\xi$-uncorrelated 
with respect to a $\xi\in\vs{\xi}$,
\begin{equation}
\label{eq:DuncII}
\mathcal{D}_{\vs{\xi}\text{-unc}}:=\bigcup_{\xi\in\vs{\xi}}\mathcal{D}_{\xi\text{-unc}};
\end{equation}
the others are the \emph{$\vs{\xi}$-correlated states}.
The \emph{$\vs{\xi}$-separable states} are those, 
which are convex combinations of $\vs{\xi}$-uncorrelated ones,
\begin{equation}
\label{eq:DsepII}
\mathcal{D}_{\vs{\xi}\text{-sep}}:=\Conv\mathcal{D}_{\vs{\xi}\text{-unc}};
\end{equation}
the others are the \emph{$\vs{\xi}$-entangled states}.
These properties show the same lattice structure as the partition ideals~\cite{Szalay-2015b}, $P_\text{II}$,
that is,
\begin{equation}
\label{eq:DincII}
\vs{\upsilon}\preceq\vs{\xi} \displayiff 
\mathcal{D}_{\vs{\upsilon}\text{-unc}}\subseteq\mathcal{D}_{\vs{\xi}\text{-unc}},\;
\mathcal{D}_{\vs{\upsilon}\text{-sep}}\subseteq\mathcal{D}_{\vs{\xi}\text{-sep}}.
\end{equation}
(For the proof, see~\ref{appsec:qstates.I-II}.)
Note that
$\mathcal{D}_{\vs{\xi}\text{-unc}}$ is closed under LO, and
$\mathcal{D}_{\vs{\xi}\text{-sep}}$ is closed under LOCC~\cite{Szalay-2015b}.
(Here locality is understood with respect to the finest partition.)
Special cases are
the \emph{$k$-partitionably uncorrelated} and
the \emph{$k'$-producibly uncorrelated} states,
$\mathcal{D}_{k\text{-part\, unc}} := \mathcal{D}_{\vs{\mu}_k\text{-unc}}$ and
$\mathcal{D}_{k'\text{-prod\, unc}}:= \mathcal{D}_{\vs{\nu}_{k'}\text{-unc}}$,
which are of the product form
of at least $k$ density operators,
and of density operators of at most $k'$ elementary subsystems, respectively. 
The \emph{$k$-partitionably separable} (also called \emph{$k$-separable}~\cite{Acin-2001,Guhne-2005,Seevinck-2008,Facchi-2006,Facchi-2010a}) and
the \emph{$k'$-producibly separable} (also called \emph{$k'$-producible}~\cite{Seevinck-2001,Guhne-2005,Toth-2010}) states are
$\mathcal{D}_{k\text{-part\,sep}} := \mathcal{D}_{\vs{\mu}_k\text{-sep}}$ and
$\mathcal{D}_{k'\text{-prod\,sep}}:=\mathcal{D}_{\vs{\nu}_{k'}\text{-sep}}$,
which can be decomposed into $k$-partitionably, and $k'$-producibly uncorrelated states, respectively.
These properties show the same lattice structure (chain) as the corresponding partition ideals,
that is,
$l  \geq k   \inlineiff
  \mathcal{D}_{l\text{-part\,unc}} \subseteq \mathcal{D}_{k\text{-part\,unc}},\;
  \mathcal{D}_{l\text{-part\,sep}} \subseteq \mathcal{D}_{k\text{-part\,sep}}$,
and
$l' \leq k'  \inlineiff  
  \mathcal{D}_{l'\text{-prod\,unc}} \subseteq \mathcal{D}_{k'\text{-prod\,unc}},\;
  \mathcal{D}_{l'\text{-prod\,sep}} \subseteq \mathcal{D}_{k'\text{-prod\,sep}}$,
that is,
if a state is $l$-partitionably uncorrelated (or separable)
then it is also $k$-partitionably uncorrelated (or separable) for all $l\geq k$,
and 
if a state is $l'$-producibly uncorrelated (or separable)
then it is also $k'$-producibly uncorrelated (or separable) for all $l'\leq k'$.

\subsection{Level~III: classes}
\label{sec:recall.LIII}

The partial correlation and entanglement properties form an inclusion hierarchy~\eref{eq:DincII}.
For handling the possible \emph{partial correlation} and \emph{partial entanglement classes}
(which are state-sets of \emph{well-defined} Level~II partial correlation and entanglement \emph{properties},
that is, the possible intersections of the state-sets 
$\mathcal{D}_{\vs{\xi}\text{-unc}}$ and $\mathcal{D}_{\vs{\xi}\text{-sep}}$),
we need to use the nonempty \emph{up-sets} of nonempty down-sets of partitions 
(also called nonempty \emph{filters} of nonempty partition ideals)~\cite{Szalay-2015b}, which are
sets of partition ideals
$\vvs{\xi}=\set{\vs{\xi}_1,\vs{\xi}_2,\dots,\vs{\xi}_{\abs{\vvs{\xi}}}}\subseteq P_\text{II}$ 
which are closed upwards with respect to $\preceq$
(that is, if $\vs{\xi}\in\vvs{\xi}$ then every $\vs{\upsilon}\succeq\vs{\xi}$ is also $\vs{\upsilon}\in\vvs{\xi}$).
The set of the nonempty filters of nonempty partition ideals of $L$ is denoted with 
$P_\text{III} :=\mathcal{O}_\uparrow(P_\text{II})\setminus\{\emptyset\}$,
it possesses a lattice structure with respect to the standard inclusion
as partial order,
$\vvs{\upsilon}\preceq\vvs{\xi}$ if and only if $\vvs{\upsilon}\subseteq\vvs{\xi}$.
(For illustration, see figure~\ref{fig:Ps3}.)
In the generic case,
if the inclusion of sets can be described by a poset $P$,
then $\mathcal{O}_\uparrow(P)$ is \emph{sufficient} for the description of the intersections.
(For the proof, see~\ref{appsec:posets.int}.)
One may make the classification coarser \cite{Szalay-2015b}
by selecting a \emph{sub(po)set} of partial correlation and entanglement properties $P_\text{II*}\subseteq P_\text{II}$,
with respect to which the classification is done,
$P_\text{III*} :=\mathcal{O}_\uparrow(P_\text{II*})\setminus\{\emptyset\}$.
(This is not a lattice if $P_\text{II*}$ has no top element.)

For a filter $\vvs{\xi}\in P_\text{III*}$, the \emph{strictly $\vvs{\xi}$-separable states} are those which are
$\vs{\xi}$-separable for all $\vs{\xi} \in\vvs{\xi}$,
and $\vs{\xi}'$-entangled for all $\vs{\xi}'\in\cmpl{\vvs{\xi}}=P_\text{II*}\setminus\vvs{\xi}$~\cite{Szalay-2015b},
the \emph{class} of these is
\begin{equation}
\label{eq:Csep}
\mathcal{C}_{\vvs{\xi}\text{-sep}} :=
 \bigcap_{\vs{\xi}'\in\cmpl{\vvs{\xi}}} \cmpl{\mathcal{D}_{\vs{\xi}'\text{-sep}}} \cap 
 \bigcap_{\vs{\xi}\in       \vvs{\xi} }       \mathcal{D}_{\vs{\xi}\text{-sep}}.
\end{equation}
(Note that the complement $\cmpl{\vvs{\xi}}$ is always taken with respect to $P_\text{II*}$.)
It is conjectured that 
$\vvs{\xi}$-separability is nontrivial for all $\vvs{\xi}\in P_\text{III*}$
(that is, $\mathcal{C}_{\vvs{\xi}\text{-sep}}$ is nonempty)~\cite{Szalay-2015b}.
Note that 
the Level~III hierarchy $\preceq$ \emph{compares the strength of entanglement} among the classes labelled by $P_\text{III*}$,
in the sense that
if there exists a $\varrho \in \mathcal{C}_{\vvs{\upsilon}\text{-sep}}$ and an LOCC map
mapping it into $\mathcal{C}_{\vvs{\xi}\text{-sep}}$, then $\vvs{\upsilon}\preceq\vvs{\xi}$~\cite{Szalay-2015b}.

If we consider the \emph{class of strictly $\vvs{\xi}$-uncorrelated states},
being the possible intersections of the state sets $\mathcal{D}_{\vs{\xi}\text{-unc}}$,
encoded by the filter $\vvs{\xi}\in P_\text{III*}$ as
\begin{equation}
\label{eq:Cunc}
\mathcal{C}_{\vvs{\xi}\text{-unc}} :=
 \bigcap_{\vs{\xi}'\in\cmpl{\vvs{\xi}}} \cmpl{\mathcal{D}_{\vs{\xi}'\text{-unc}}} \cap 
 \bigcap_{\vs{\xi}\in       \vvs{\xi} }       \mathcal{D}_{\vs{\xi}\text{-unc}}
\end{equation}
(that is, a state is $\vvs{\xi}$-uncorrelated,
if it is $\vs{\xi}$-uncorrelated for all $\vs{\xi} \in\vvs{\xi}$,
and $\vs{\xi}'$-correlated for all $\vs{\xi}'\in\cmpl{\vvs{\xi}}$),
then the structure $P_\text{III*}$ becomes simpler.
In the following section, 
we elaborate this, by giving necessary and sufficient conditions 
for the nonemptiness of the classes
and for the uniqueness of the labels.
Note that if there exists a $\varrho \in \mathcal{C}_{\vvs{\upsilon}\text{-unc}}$ and an LO map
mapping it into $\mathcal{C}_{\vvs{\xi}\text{-unc}}$, then $\vvs{\upsilon}\preceq\vvs{\xi}$.
(This can be proven analogously to the partial separability result with LOCC above,
see Appendix A.12 in~\cite{Szalay-2015b},
it relies only on the LO closedness of the state sets $\mathcal{D}_{\vs{\xi}\text{-unc}}$.)
In this sense, the Level~III hierarchy $\preceq$ \emph{compares the strength of correlation} among the classes labelled by $P_\text{III*}$.

\section{The structure of the classification of correlations}
\label{sec:struct}

In this section,
after establishing some important facts 
about the Level~I-II structure of multipartite correlations 
(section~\ref{sec:struct.corr}),
we give necessary and sufficient conditions
for the existence (section~\ref{sec:struct.corrclassex}) 
and uniqueness (section~\ref{sec:struct.corrclassunique}) of the class of a given class-label.
These results are general, holding for any classification, that is, for any choice of $P_\text{II*}$.

\subsection{The structure of the Level~I-II correlations}
\label{sec:struct.corr}

In the Level I classification of correlations,
for the partitions $\xi,\xi'\in P_\text{I}$, we have
\begin{equation}
\label{eq:corrlatticeI.meet}
\mathcal{D}_{\xi \text{-unc}} \cap \mathcal{D}_{\xi' \text{-unc}}
= \mathcal{D}_{(\xi\wedge\xi')\text{-unc}}.
\end{equation}
This can be proven in the same way as
the same result for pure states was proven in
Appendix A.5 in~\cite{Szalay-2015b}.
Note that, due to the convex hull construction~\eref{eq:DsepI},
a similar identity does not hold in the Level I classification of entanglement.
($\wedge$ and $\vee$ are the greatest lower bound, or meet, and least upper bound, or join, 
in the respective lattice, see in~\ref{appsec:posets.defs}.)

In the Level II classification of correlations,
for the ideals $\vs{\xi},\vs{\xi}'\in P_\text{II}$, we have
\begin{equation}
\label{eq:corrlatticeII.join}
\mathcal{D}_{\vs{\xi} \text{-unc}} \cup \mathcal{D}_{\vs{\xi}' \text{-unc}}
= \mathcal{D}_{(\vs{\xi}\vee\vs{\xi}')\text{-unc}},
\end{equation}
and
\begin{equation}
\label{eq:corrlatticeII.meet}
\mathcal{D}_{\vs{\xi} \text{-unc}} \cap \mathcal{D}_{\vs{\xi}' \text{-unc}}
= \mathcal{D}_{(\vs{\xi}\wedge\vs{\xi}')\text{-unc}}.
\end{equation}
These can be proven
in the same way as
the same result for pure states was proven in
Appendix A.9 in~\cite{Szalay-2015b}
(\eref{eq:corrlatticeII.meet} relies also on~\eref{eq:corrlatticeI.meet}).
Note that, due to the convex hull construction~\eref{eq:DsepII},
similar identities do not hold in the Level II classification of entanglement.

\subsection{The structure of the correlation classes: existence}
\label{sec:struct.corrclassex}

A filter $\vvs{\xi}\in P_\text{III*}$ may lead to empty partial correlation class~\eref{eq:Cunc}.
Here we give necessary and sufficient condition for the labelling of the nonempty partial
correlation classes.

\begin{prop}
\label{prop:exist}
For a filter $\vvs{\xi}\in P_\text{III*}$,
the class $\mathcal{C}_{\vvs{\xi}\text{-unc}}\neq\emptyset$ 
if and only if $\wedge\vvs{\xi}\not\preceq\vee\cmpl{\vvs{\xi}}$.
\end{prop}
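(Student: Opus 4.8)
The plan is to collapse both big intersections in the definition~\eref{eq:Cunc} into single uncorrelated-state sets by means of the Level~II lattice identities, thereby reducing the nonemptiness question to a single set inclusion, which the order isomorphism~\eref{eq:DincII} then translates directly into the order relation $\preceq$. First I would treat the conjunction over the filter. Since $\vvs{\xi}$ is a nonempty finite set of ideals and $P_\text{II}$ is a lattice, iterating the meet identity~\eref{eq:corrlatticeII.meet} over the elements of $\vvs{\xi}$ gives
\[
\bigcap_{\vs{\xi}\in\vvs{\xi}}\mathcal{D}_{\vs{\xi}\text{-unc}}
=\mathcal{D}_{(\wedge\vvs{\xi})\text{-unc}},
\]
where $\wedge\vvs{\xi}\in P_\text{II}$ is the meet of all elements of $\vvs{\xi}$.

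Dually, for the correlated part I would first use De Morgan to turn the intersection of complements into the complement of a union, and then iterate the join identity~\eref{eq:corrlatticeII.join} over $\cmpl{\vvs{\xi}}$,
\[
\bigcap_{\vs{\xi}'\in\cmpl{\vvs{\xi}}}\cmpl{\mathcal{D}_{\vs{\xi}'\text{-unc}}}
=\cmpl{\bigcup_{\vs{\xi}'\in\cmpl{\vvs{\xi}}}\mathcal{D}_{\vs{\xi}'\text{-unc}}}
=\cmpl{\mathcal{D}_{(\vee\cmpl{\vvs{\xi}})\text{-unc}}},
\]
with $\vee\cmpl{\vvs{\xi}}\in P_\text{II}$ the join of all elements of $\cmpl{\vvs{\xi}}$. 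Here the meets and joins are taken in the full lattice $P_\text{II}$, which is legitimate even when $P_\text{II*}$ itself is not closed under them, because the identities~\eref{eq:corrlatticeII.join} and~\eref{eq:corrlatticeII.meet} hold for all of $P_\text{II}$.

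Substituting these into~\eref{eq:Cunc} yields the compact form
\[
\mathcal{C}_{\vvs{\xi}\text{-unc}}
=\mathcal{D}_{(\wedge\vvs{\xi})\text{-unc}}\setminus\mathcal{D}_{(\vee\cmpl{\vvs{\xi}})\text{-unc}}.
\]
Since the minuend is a subset of $\mathcal{D}_L$, this set difference is nonempty precisely when $\mathcal{D}_{(\wedge\vvs{\xi})\text{-unc}}\not\subseteq\mathcal{D}_{(\vee\cmpl{\vvs{\xi}})\text{-unc}}$. Invoking the order isomorphism~\eref{eq:DincII} for the uncorrelated sets, this inclusion is equivalent to $\wedge\vvs{\xi}\preceq\vee\cmpl{\vvs{\xi}}$, so its failure is exactly the stated condition $\wedge\vvs{\xi}\not\preceq\vee\cmpl{\vvs{\xi}}$, which completes the argument.

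The routine part is the iteration of the two-element identities, justified by the finiteness of $P_\text{II}$ and the associativity of $\wedge$ and $\vee$. The step needing the most care, and the main obstacle I anticipate, is the boundary case $\cmpl{\vvs{\xi}}=\emptyset$ (the top filter $\vvs{\xi}=P_\text{II*}$): there the union runs over an empty index set and is empty, so the second identity must be read with the convention that $\vee\emptyset$ is a formal bottom for which $\mathcal{D}_{(\vee\emptyset)\text{-unc}}=\emptyset$. Under this convention $\mathcal{C}_{\vvs{\xi}\text{-unc}}=\mathcal{D}_{(\wedge\vvs{\xi})\text{-unc}}$, which is always nonempty, consistently with $\wedge\vvs{\xi}\not\preceq\vee\emptyset$ holding; I would verify this degenerate case directly against~\eref{eq:Cunc} to confirm the convention is harmless.
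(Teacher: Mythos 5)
Your proof is correct and follows essentially the same route as the paper's: De Morgan's law plus the lattice identities \eref{eq:corrlatticeII.join} and \eref{eq:corrlatticeII.meet} collapse the class~\eref{eq:Cunc} to $\cmpl{\mathcal{D}_{(\vee\cmpl{\vvs{\xi}})\text{-unc}}}\cap\mathcal{D}_{(\wedge\vvs{\xi})\text{-unc}}$, after which the equivalence $B\subseteq A\inlineiff\cmpl{A}\cap B=\emptyset$ and the order isomorphism~\eref{eq:DincII} turn nonemptiness into $\wedge\vvs{\xi}\not\preceq\vee\cmpl{\vvs{\xi}}$, exactly as in the paper. Your explicit handling of the boundary case $\cmpl{\vvs{\xi}}=\emptyset$ (reading $\vee\emptyset$ as a formal bottom, i.e.\ the empty down-set, whose uncorrelated set is empty by~\eref{eq:DuncII}, so that the condition holds vacuously and the class equals the always-nonempty $\mathcal{D}_{(\wedge\vvs{\xi})\text{-unc}}$) is a sound convention that the paper leaves implicit, and indeed necessary, since taking $\vee\emptyset$ to be the bottom $\{\bot\}$ of $P_\text{II}$ would break the statement for the top filter.
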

(We use the notations $\wedge\vvs{\xi}:=\bigwedge_{\vs{\xi}\in\vvs{\xi}}\vs{\xi}$ and
$\vee\cmpl{\vvs{\xi}}:=\bigvee_{\vs{\xi}'\in\cmpl{\vvs{\xi}}}\vs{\xi}'$.)

\begin{proof}
First, for a filter $\vvs{\xi}\in P_\text{III*}$, we write~\eref{eq:Cunc} as
\begin{equation*}
\mathcal{C}_{\vvs{\xi}\text{-unc}}
= \bigcap_{\vs{\xi}'\in\cmpl{\vvs{\xi}}} \cmpl{\mathcal{D}_{\vs{\xi}'\text{-unc}}} \cap 
   \bigcap_{\vs{\xi} \in      \vvs{\xi} }       \mathcal{D}_{\vs{\xi} \text{-unc}}
= \cmpl{\bigcup_{\vs{\xi}'\in\cmpl{\vvs{\xi}}} \mathcal{D}_{\vs{\xi}'\text{-unc}}} \cap 
   \bigcap_{\vs{\xi} \in      \vvs{\xi} }       \mathcal{D}_{\vs{\xi} \text{-unc}},
\end{equation*}
where the second equality is
De Morgan's law, 
then, applying~\eref{eq:corrlatticeII.join} and~\eref{eq:corrlatticeII.meet},
\begin{equation}
\label{eq:easy}
\mathcal{C}_{\vvs{\xi}\text{-unc}}
= \cmpl{\mathcal{D}_{(\vee\cmpl{\vvs{\xi}})\text{-unc}} } \cap
         \mathcal{D}_{(\wedge\vvs{\xi})\text{-unc}}.
\end{equation}
(Note that $\wedge\vvs{\xi},\vee\cmpl{\vvs{\xi}}\in P_\text{II}$ in general, 
they are not necessarily contained in $P_\text{II*}$,
since $P_\text{II*}$ is not necessarily a lattice.)
Now, since $B\subseteq A\inlineiff \cmpl{A}\cap B =\emptyset$,
we have that 
\begin{equation*}
\mathcal{C}_{\vvs{\xi}\text{-unc}} = \emptyset
\displayiff
\mathcal{D}_{(\wedge\vvs{\xi})\text{-unc}} \subseteq \mathcal{D}_{(\vee\cmpl{\vvs{\xi}})\text{-unc}},
\end{equation*}
which, applying~\eref{eq:DincII}, leads to
\begin{equation*}
\mathcal{C}_{\vvs{\xi}\text{-unc}} = \emptyset
\displayiff
\wedge\vvs{\xi} \preceq \vee\cmpl{\vvs{\xi}},
\end{equation*}
the contraposition of which is just Proposition~\ref{prop:exist}. 
\end{proof}

Note that a given filter $\vvs{\xi}\in P_\text{III*}$ may lead to empty or nonempty class,
depending on the choice of $P_\text{II*}\subseteq P_\text{II}$,
since, in the condition given in Proposition~\ref{prop:exist},
the complement $\cmpl{\vvs{\xi}}$ is given with respect to $P_\text{II*}$.
The fulfilment of the nonemptiness condition $\wedge\vvs{\xi}\not\preceq\vee\cmpl{\vvs{\xi}}$
is hard to check \emph{in general},
that is, without examining each $\vvs{\xi}\in P_\text{III*}$ one by one.
Now we give some tools which can be used for this,
and also for presenting general conditions 
for some important classifications $P_\text{II*}$, given in the subsequent sections.

\begin{lem}
\label{lem:equiv}
The following properties of a filter $\vvs{\xi}\in P_\text{III*}$ are equivalent:\\
(i) $\forall \vs{\xi}'\in P_\text{II*}$:
if $\vs{\xi}'\in\cmpl{\vvs{\xi}}$ then $\wedge\vvs{\xi} \not\preceq\vs{\xi}'$,\\
(i') $\forall \vs{\xi}\in P_\text{II*}$:
if $\wedge\vvs{\xi}\preceq\vs{\xi}$ then $\vs{\xi}\in\vvs{\xi}$,\\
(ii) $\upset\{\wedge\vvs{\xi}\}\cap P_\text{II*} = \vvs{\xi}$.
\end{lem}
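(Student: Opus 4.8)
The plan is to prove the three-way equivalence cyclically, or more efficiently, by proving $\text{(i)}\Leftrightarrow\text{(i')}$ and $\text{(i')}\Leftrightarrow\text{(ii)}$ separately, since each pair differs by a single, clean logical or order-theoretic manipulation. The key observation at the outset is that for any $\vs{\xi}\in P_\text{II}$, the relation $\wedge\vvs{\xi}\preceq\vs{\xi}$ is by definition $\wedge\vvs{\xi}\in\downset\{\vs{\xi}\}$, equivalently $\vs{\xi}\in\upset\{\wedge\vvs{\xi}\}$; this is just the order-theoretic reading of the principal up-set, and it is what lets (ii) talk to the others.

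First I would establish $\text{(i)}\Leftrightarrow\text{(i')}$. This is pure contraposition combined with the fact that, for $\vs{\xi}'\in P_\text{II*}$, membership in $\cmpl{\vvs{\xi}}=P_\text{II*}\setminus\vvs{\xi}$ is the negation of membership in $\vvs{\xi}$. Concretely, (i) says: for all $\vs{\xi}'\in P_\text{II*}$, $\vs{\xi}'\notin\vvs{\xi}\Rightarrow\wedge\vvs{\xi}\not\preceq\vs{\xi}'$. Contraposing the inner implication gives $\wedge\vvs{\xi}\preceq\vs{\xi}'\Rightarrow\vs{\xi}'\in\vvs{\xi}$, which (renaming the quantified variable) is exactly (i'). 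So this step is a one-line logical equivalence with no order theory beyond unwinding the complement.

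Next I would prove $\text{(i')}\Leftrightarrow\text{(ii)}$. Here (i') asserts that every $\vs{\xi}\in P_\text{II*}$ with $\wedge\vvs{\xi}\preceq\vs{\xi}$ lies in $\vvs{\xi}$, which is precisely the inclusion $\upset\{\wedge\vvs{\xi}\}\cap P_\text{II*}\subseteq\vvs{\xi}$ after rewriting $\wedge\vvs{\xi}\preceq\vs{\xi}$ as $\vs{\xi}\in\upset\{\wedge\vvs{\xi}\}$. For (ii) I still need the reverse inclusion $\vvs{\xi}\subseteq\upset\{\wedge\vvs{\xi}\}\cap P_\text{II*}$, which holds unconditionally: every $\vs{\xi}\in\vvs{\xi}$ is in $P_\text{II*}$ (as $\vvs{\xi}\subseteq P_\text{II*}$), and $\wedge\vvs{\xi}=\bigwedge_{\vs{\upsilon}\in\vvs{\xi}}\vs{\upsilon}\preceq\vs{\xi}$ is the defining property of the greatest lower bound, giving $\vs{\xi}\in\upset\{\wedge\vvs{\xi}\}$. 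Thus the reverse inclusion is automatic, and (ii) is equivalent to the one nontrivial inclusion that (i') encodes.

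The main obstacle, such as it is, is bookkeeping rather than mathematics: I must keep straight that the complement $\cmpl{\vvs{\xi}}$ is taken relative to $P_\text{II*}$ (not to all of $P_\text{II}$), so that all three statements are genuinely intersected with $P_\text{II*}$, and I must remember that $\wedge\vvs{\xi}$ may fall outside $P_\text{II*}$ — which is exactly why (ii) writes $\upset\{\wedge\vvs{\xi}\}\cap P_\text{II*}$ rather than an up-set internal to $P_\text{II*}$. Once the reverse inclusion $\vvs{\xi}\subseteq\upset\{\wedge\vvs{\xi}\}$ is identified as the free direction coming from $\wedge$ being a lower bound, the whole lemma collapses into two short equivalences, and no appeal beyond the definitions of meet, up-set, and set complement is needed.
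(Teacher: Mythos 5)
Your proposal is correct and follows essentially the same route as the paper: (i)$\Leftrightarrow$(i') by contraposing the inner implication, and (i')$\Leftrightarrow$(ii) by identifying (i') with the inclusion $\upset\{\wedge\vvs{\xi}\}\cap P_\text{II*}\subseteq\vvs{\xi}$ while noting that the reverse inclusion holds unconditionally because the meet is a lower bound and $\vvs{\xi}\subseteq P_\text{II*}$. Your only presentational difference is packaging the paper's two implications (i')$\Rightarrow$(ii) and (ii)$\Rightarrow$(i') as a single biconditional, and your explicit remark that $\wedge\vvs{\xi}$ may lie outside $P_\text{II*}$ is a correct and welcome clarification of why the intersection with $P_\text{II*}$ appears in (ii).
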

\begin{proof} The steps are the following:\\
(i)~$\Leftrightarrow$~(i'): they are the contrapositions of each other.\\
(i')~$\Rightarrow$~(ii):
for $\vs{\xi}\in P_\text{II*}$,
$\wedge\vvs{\xi}\preceq\vs{\xi}$ means that
$\vs{\xi}\in\upset\{\wedge\vvs{\xi}\}\cap P_\text{II*}$,
that is,
by supposing (i'),
we have $\upset\{\wedge\vvs{\xi}\}\cap P_\text{II*}\subseteq\vvs{\xi}$.
The opposite inclusion holds in general:
for all $\vs{\xi}\in\vvs{\xi}$, we have that $\wedge\vvs{\xi}\preceq\vs{\xi}$,
since the meet $\wedge\vvs{\xi}$ is the greatest \emph{lower} bound of the elements of $\vvs{\xi}$,
so, because we also have $\vs{\xi}\in P_\text{II*}$, we end up with
$\vs{\xi}\in\upset\{\wedge\vvs{\xi}\}\cap P_\text{II*}$,
that is, $\upset\{\wedge\vvs{\xi}\}\cap P_\text{II*}\supseteq\vvs{\xi}$.\\
(ii)~$\Rightarrow$~(i'):
all $\vs{\xi}\in P_\text{II*}$ such that $\wedge\vvs{\xi}\preceq\vs{\xi}$
is contained in $\upset\{\wedge\vvs{\xi}\}\cap P_\text{II*}$,
which equals to $\vvs{\xi}$ by the assumption, leading to $\vs{\xi}\in\vvs{\xi}$.
\end{proof}

\begin{lem}
\label{lem:suff}
For a filter $\vvs{\xi}\in P_\text{III*}$,
we have that
if $\wedge\vvs{\xi}\not\preceq\vee\cmpl{\vvs{\xi}}$,
then $\upset\{\wedge\vvs{\xi}\}\cap P_\text{II*} = \vvs{\xi}$.
\end{lem}

\begin{proof}
This can be proven contrapositively:
if $\upset\{\wedge\vvs{\xi}\}\cap P_\text{II*} \neq \vvs{\xi}$ then $\wedge\vvs{\xi}\preceq\vee\cmpl{\vvs{\xi}}$.
In Lemma~\ref{lem:equiv}, (ii) does not hold if and only if (i) does not hold,
which means that there exists $\vs{\xi}'\in \cmpl{\vvs{\xi}}$ which is
$\wedge\vvs{\xi}\preceq\vs{\xi}'$.
With this $\vs{\xi}'$ we have $\wedge\vvs{\xi}\preceq\vs{\xi}'\preceq\vee\cmpl{\vvs{\xi}}$,
leading to $\wedge\vvs{\xi}\preceq\vee\cmpl{\vvs{\xi}}$
by the transitivity of the partial order.
\end{proof}

\begin{lem}
\label{lem:suffd}
For a filter $\vvs{\xi}\in P_\text{III*}$,
we have that
if $\wedge\vvs{\xi}\not\preceq\vee\cmpl{\vvs{\xi}}$,
then $\downset\{\vee\cmpl{\vvs{\xi}}\}\cap P_\text{II*} = \cmpl{\vvs{\xi}}$.
\end{lem}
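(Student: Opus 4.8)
The plan is to prove Lemma~\ref{lem:suffd} by exactly the same contrapositive strategy used for Lemma~\ref{lem:suff}, but with the roles of meet and join, up-set and down-set, $\vvs{\xi}$ and $\cmpl{\vvs{\xi}}$ interchanged. That is, I would prove: if $\downset\{\vee\cmpl{\vvs{\xi}}\}\cap P_\text{II*} \neq \cmpl{\vvs{\xi}}$, then $\wedge\vvs{\xi}\preceq\vee\cmpl{\vvs{\xi}}$. The key structural observation is that the whole setup is order-dual: $\cmpl{\vvs{\xi}}$ is an \emph{ideal} (down-set) of $P_\text{II*}$ just as $\vvs{\xi}$ is a filter, and $\vee\cmpl{\vvs{\xi}}$ is its join (least upper bound) just as $\wedge\vvs{\xi}$ is the meet of $\vvs{\xi}$.

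First I would establish the dual of Lemma~\ref{lem:equiv}, giving the equivalence of the following three statements about the filter $\vvs{\xi}$ (equivalently, about the ideal $\cmpl{\vvs{\xi}}$): (a)~for all $\vs{\xi}\in\vvs{\xi}$, $\vs{\xi}\not\preceq\vee\cmpl{\vvs{\xi}}$; (a')~for all $\vs{\xi}'\in P_\text{II*}$, if $\vs{\xi}'\preceq\vee\cmpl{\vvs{\xi}}$ then $\vs{\xi}'\in\cmpl{\vvs{\xi}}$; and (b)~$\downset\{\vee\cmpl{\vvs{\xi}}\}\cap P_\text{II*} = \cmpl{\vvs{\xi}}$. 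Here (a) and (a') are contrapositives of one another, and (a')$\Leftrightarrow$(b) follows by the same two-inclusion argument as in the proof of Lemma~\ref{lem:equiv}: the inclusion $\cmpl{\vvs{\xi}}\subseteq\downset\{\vee\cmpl{\vvs{\xi}}\}\cap P_\text{II*}$ holds unconditionally because $\vee\cmpl{\vvs{\xi}}$ is the greatest lower\,/\,\emph{least upper} bound of $\cmpl{\vvs{\xi}}$, so every $\vs{\xi}'\in\cmpl{\vvs{\xi}}$ satisfies $\vs{\xi}'\preceq\vee\cmpl{\vvs{\xi}}$; the reverse inclusion is precisely what (a') supplies.

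Next I would run the contrapositive argument. Suppose $\downset\{\vee\cmpl{\vvs{\xi}}\}\cap P_\text{II*} \neq \cmpl{\vvs{\xi}}$. By the dual equivalence just set up, (b) fails, hence (a) fails, so there exists some $\vs{\xi}\in\vvs{\xi}$ with $\vs{\xi}\preceq\vee\cmpl{\vvs{\xi}}$. For this $\vs{\xi}$ we also have $\wedge\vvs{\xi}\preceq\vs{\xi}$, since $\wedge\vvs{\xi}$ is the greatest lower bound of $\vvs{\xi}$. Chaining these and using transitivity of $\preceq$ gives $\wedge\vvs{\xi}\preceq\vs{\xi}\preceq\vee\cmpl{\vvs{\xi}}$, hence $\wedge\vvs{\xi}\preceq\vee\cmpl{\vvs{\xi}}$, which is the desired conclusion of the contrapositive.

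I do not expect a genuine obstacle here, since the statement is the exact order-theoretic dual of Lemma~\ref{lem:suff}; the only point requiring care is setting up the dual of Lemma~\ref{lem:equiv} correctly rather than merely invoking Lemma~\ref{lem:equiv} itself. One must verify that $\cmpl{\vvs{\xi}}$ really is a down-set of $P_\text{II*}$ so that $\vee\cmpl{\vvs{\xi}}$ plays the role dual to $\wedge\vvs{\xi}$; this is immediate because $\vvs{\xi}$ is an up-set and the complement of an up-set in any poset is a down-set. A cleaner alternative, which I would mention, is to invoke the duality principle for posets directly: since all the ingredients (meet\,/\,join, up-set\,/\,down-set, filter\,/\,ideal, the direction of $\preceq$) are swapped consistently, Lemma~\ref{lem:suffd} is obtained from Lemma~\ref{lem:suff} by applying it in the order-dual poset $P_\text{II*}^{\partial}$, so no new work is needed beyond noting the correspondence.
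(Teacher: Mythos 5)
Your proof is correct and follows exactly the route the paper itself takes: its proof of Lemma~\ref{lem:suffd} is the one-line remark that it is the dual of Lemma~\ref{lem:suff}, ``proven analogously (by proving also the dual of Lemma~\ref{lem:equiv})'', which is precisely the argument you spell out. Your dual of Lemma~\ref{lem:equiv}, the unconditional inclusion $\cmpl{\vvs{\xi}}\subseteq\downset\{\vee\cmpl{\vvs{\xi}}\}\cap P_\text{II*}$, and the contrapositive chain $\wedge\vvs{\xi}\preceq\vs{\xi}\preceq\vee\cmpl{\vvs{\xi}}$ are all sound, so no gap remains.
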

\begin{proof}
This is the dual of Lemma~\ref{lem:suff}, so it can be proven analogously
(by proving also the dual of Lemma~\ref{lem:equiv}).
\end{proof}

Lemma~\ref{lem:suff} and Lemma~\ref{lem:suffd} tell us that
the nonempty classes can be labelled by \emph{principal filters} restricted to $P_\text{II*}$.
The reverse is not true in general.

\begin{lem}
\label{lem:mm}
For a filter $\vvs{\xi}\in P_\text{III*}$,
we have that
$\wedge\vvs{\xi}\not\preceq\vee\cmpl{\vvs{\xi}}$
if and only if 
$\upset\{\wedge\vvs{\xi}\}\cap\downset\{\vee\cmpl{\vvs{\xi}}\}=\emptyset$.
\end{lem}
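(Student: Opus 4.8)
The plan is to recognize that this is a purely order-theoretic statement about the lattice $P_\text{II}$, with the quantum content playing no role. Writing $\vs{a}:=\wedge\vvs{\xi}$ and $\vs{b}:=\vee\cmpl{\vvs{\xi}}$ for brevity, both of which are genuine elements of the lattice $P_\text{II}$ (even if they need not lie in $P_\text{II*}$), the claim reduces to the assertion that $\vs{a}\not\preceq\vs{b}$ if and only if $\upset\{\vs{a}\}\cap\downset\{\vs{b}\}=\emptyset$. The key observation is that this intersection is exactly the \emph{interval} $\sset{\vs{\zeta}\in P_\text{II}}{\vs{a}\preceq\vs{\zeta}\preceq\vs{b}}$, so the statement is equivalent to saying that the interval is empty precisely when $\vs{a}\not\preceq\vs{b}$.

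I would then prove the contrapositive in each direction, with each direction invoking only one of the defining properties of a partial order. For the ``only if'' direction I assume the intersection is nonempty and pick any $\vs{\zeta}$ in it; then $\vs{a}\preceq\vs{\zeta}$ and $\vs{\zeta}\preceq\vs{b}$, so transitivity of $\preceq$ yields $\vs{a}\preceq\vs{b}$. For the ``if'' direction I assume $\vs{a}\preceq\vs{b}$; then reflexivity gives $\vs{a}\in\upset\{\vs{a}\}$, while the assumption gives $\vs{a}\in\downset\{\vs{b}\}$, so $\vs{a}$ itself witnesses $\upset\{\vs{a}\}\cap\downset\{\vs{b}\}\neq\emptyset$ (one could equally use $\vs{b}$).

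There is no genuine analytic obstacle here; the only point requiring care is that the up-set and down-set are taken in the full lattice $P_\text{II}$ rather than in $P_\text{II*}$, so that $\upset\{\vs{a}\}$ and $\downset\{\vs{b}\}$ are well-defined and the meet and join supply actual elements $\vs{a},\vs{b}\in P_\text{II}$ on which to hang them. This is guaranteed since $P_\text{II}$ is a lattice. The lemma thus complements Lemmas~\ref{lem:suff} and~\ref{lem:suffd}, which are phrased in terms of $P_\text{II*}$, by providing an unrestricted reformulation of the nonemptiness condition $\wedge\vvs{\xi}\not\preceq\vee\cmpl{\vvs{\xi}}$ of Proposition~\ref{prop:exist}.
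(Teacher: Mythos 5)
Your proof is correct and is essentially identical to the paper's: the paper likewise proves the contraposed equivalence $\vs{\upsilon}\preceq\vs{\upsilon}' \inlineiff \upset\{\vs{\upsilon}\}\cap\downset\{\vs{\upsilon}'\}\neq\emptyset$ in $P_\text{II}$, using transitivity on a witness $\vs{\zeta}$ in the intersection for one direction and reflexivity (with $\wedge\vvs{\xi}$ itself as the witness) for the other. Your remark that the up-set and down-set are taken in the full lattice $P_\text{II}$ rather than $P_\text{II*}$ matches the paper's parenthetical note as well.
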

 
\begin{proof}
This is 
the special case of
the contraposition of
that,
for all $\vs{\upsilon},\vs{\upsilon}'\in P_\text{II}$,
we have that
$\vs{\upsilon}\preceq\vs{\upsilon}'$
if and only if
$\upset\{\vs{\upsilon}\}\cap\downset\{\vs{\upsilon}'\}\neq\emptyset$.\\
To see the \textit{``if''} implication,
we have an $\vs{\zeta}\in \upset\{\vs{\upsilon}\}$,
which is also
$\vs{\zeta}\in \downset\{\vs{\upsilon}'\}$,
that is,
$\vs{\upsilon}\preceq\vs{\zeta}$ and
$\vs{\zeta}\preceq\vs{\upsilon}'$,
leading to that $\vs{\upsilon}\preceq\vs{\upsilon}'$
by the transitivity of the partial order.\\
To see the \textit{``only if''} implication,
we have that 
$\vs{\upsilon}\in\upset\{\vs{\upsilon}\}$ obviously, and
$\vs{\upsilon}\in\downset\{\vs{\upsilon}'\}$ by the assumption,
so $\vs{\upsilon}\in\upset\{\vs{\upsilon}\}\cap\downset\{\vs{\upsilon}'\}$, which is then not empty.
\end{proof}

With the help of Lemma~\ref{lem:mm}, we can see the role of 
Lemma~\ref{lem:suff} and Lemma~\ref{lem:suffd}.
For a filter $\vvs{\xi}\in P_\text{III*}$, 
using Proposition~\ref{prop:exist} and Lemma~\ref{lem:mm}, we have \emph{in general} that
\begin{equation*}
\mathcal{C}_{\vvs{\xi}\text{-unc}}\neq\emptyset
\displayiff
\wedge\vvs{\xi} \not\preceq \vee\cmpl{\vvs{\xi}}
\displayiff
\upset\{\wedge\vvs{\xi}\}\cap\downset\{\vee\cmpl{\vvs{\xi}}\}=\emptyset
\end{equation*}
\begin{equation*}
\qquad\displaythen
\bigl(\upset\{\wedge\vvs{\xi}\}\cap P_\text{II*}\bigr) \cap
\bigl(\downset\{\vee\cmpl{\vvs{\xi}}\}\cap P_\text{II*}\bigr) =\emptyset
\displaythen
\vvs{\xi} \cap \cmpl{\vvs{\xi}} =\emptyset,
\end{equation*}
where we have used at the last arrow that 
$\vvs{\xi} \subseteq \upset\{\wedge\vvs{\xi}\}\cap P_\text{II*}$
and
$\cmpl{\vvs{\xi}}\subseteq\downset\{\vee\cmpl{\vvs{\xi}}\}\cap P_\text{II*}$,
which hold in general (see in the (i')~$\Rightarrow$~(ii) implication of the proof of Lemma~\ref{lem:equiv}).
Note that Lemma~\ref{lem:suff} and Lemma~\ref{lem:suffd} \emph{tell more:}
if $\mathcal{C}_{\vvs{\xi}\text{-unc}}\neq\emptyset$, then 
$\vvs{\xi} = \upset\{\wedge\vvs{\xi}\}\cap P_\text{II*}$
and
$\cmpl{\vvs{\xi}} = \downset\{\vee\cmpl{\vvs{\xi}}\}\cap P_\text{II*}$
in the above conditions. 
So the last arrow is $\Longleftrightarrow$,
if we restrict to the nonempty case.

Note, on the other hand, that Lemma~\ref{lem:suff} and Lemma~\ref{lem:suffd} tell us 
that, for \emph{nonempty} classes,
$\vvs{\xi}$, $\wedge\vvs{\xi}$ and $\vee\cmpl{\vvs{\xi}}$ determine one another.
For example,
if $\wedge\vvs{\xi}=\wedge\vvs{\upsilon}$
then $\upset\{\wedge\vvs{\xi}\}=\upset\{\wedge\vvs{\upsilon}\}$, 
then $\upset\{\wedge\vvs{\xi}\}\cap P_\text{II*}=\upset\{\wedge\vvs{\upsilon}\}\cap P_\text{II*}$,
then, by Lemma~\ref{lem:suff}, $\vvs{\xi}=\vvs{\upsilon}$,
while the reverse implication is obvious.

\subsection{The structure of the correlation classes: uniqueness}
\label{sec:struct.corrclassunique}

Two different filters $\vvs{\xi},\vvs{\upsilon}\in P_\text{III*}$
may lead to the same partial correlation class~\eref{eq:Cunc}.
Here we give necessary and sufficient condition for the unique labelling of the partial correlation classes.

\begin{prop}
\label{prop:unique}
For the filters $\vvs{\xi},\vvs{\upsilon}\in P_\text{III*}$,
the classes
$\mathcal{C}_{\vvs{\xi}\text{-unc}}=\mathcal{C}_{\vvs{\upsilon}\text{-unc}}$
if and only if the following conditions hold:
\begin{eqnarray*}
(\wedge\vvs{\xi})     \wedge (\vee\cmpl{\vvs{\upsilon}}) \preceq \vee\cmpl{\vvs{\xi}}, \qquad
\wedge\vvs{\xi}                                          \preceq (\vee\cmpl{\vvs{\xi}}) \vee (\wedge\vvs{\upsilon}),\\
(\wedge\vvs{\upsilon})\wedge (\vee\cmpl{\vvs{\xi}})      \preceq \vee\cmpl{\vvs{\upsilon}}, \qquad
\wedge\vvs{\upsilon}                                     \preceq (\vee\cmpl{\vvs{\upsilon}}) \vee (\wedge\vvs{\xi}).
\end{eqnarray*}
\end{prop}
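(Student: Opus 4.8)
The plan is to reduce the equality of two classes to a pair of mutual set inclusions, and then to translate each inclusion into the stated order conditions by means of the meet and join identities~\eref{eq:corrlatticeII.meet} and~\eref{eq:corrlatticeII.join} together with the order isomorphism~\eref{eq:DincII}. First I would invoke~\eref{eq:easy} to write both classes in the reduced form
\[
\mathcal{C}_{\vvs{\xi}\text{-unc}} = \cmpl{\mathcal{D}_{(\vee\cmpl{\vvs{\xi}})\text{-unc}}} \cap \mathcal{D}_{(\wedge\vvs{\xi})\text{-unc}}, \qquad
\mathcal{C}_{\vvs{\upsilon}\text{-unc}} = \cmpl{\mathcal{D}_{(\vee\cmpl{\vvs{\upsilon}})\text{-unc}}} \cap \mathcal{D}_{(\wedge\vvs{\upsilon})\text{-unc}},
\]
so that the claim becomes an equality of two sets of the shape $\mathcal{D}\cap\cmpl{\mathcal{D}}$. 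Abbreviating $A = \mathcal{D}_{(\wedge\vvs{\xi})\text{-unc}}$, $B = \mathcal{D}_{(\vee\cmpl{\vvs{\xi}})\text{-unc}}$, $C = \mathcal{D}_{(\wedge\vvs{\upsilon})\text{-unc}}$, $D = \mathcal{D}_{(\vee\cmpl{\vvs{\upsilon}})\text{-unc}}$, the statement reads $A \cap \cmpl{B} = C \cap \cmpl{D}$, which I would split into the two inclusions $A \cap \cmpl{B} \subseteq C \cap \cmpl{D}$ and $C \cap \cmpl{D} \subseteq A \cap \cmpl{B}$. These are exchanged by the symmetry $\vvs{\xi}\leftrightarrow\vvs{\upsilon}$, so it suffices to treat the first and obtain the second by relabelling.

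For the first inclusion I would further split $A \cap \cmpl{B} \subseteq C \cap \cmpl{D}$ into $A \cap \cmpl{B} \subseteq C$ and $A \cap \cmpl{B} \subseteq \cmpl{D}$, handling each with the elementary identity $X \subseteq Y \inlineiff X \cap \cmpl{Y} = \emptyset$. The part $A \cap \cmpl{B} \subseteq \cmpl{D}$ is equivalent to $(A \cap D) \cap \cmpl{B} = \emptyset$, hence to $A \cap D \subseteq B$; rewriting $A \cap D = \mathcal{D}_{((\wedge\vvs{\xi}) \wedge (\vee\cmpl{\vvs{\upsilon}}))\text{-unc}}$ by~\eref{eq:corrlatticeII.meet} and then applying~\eref{eq:DincII}, this becomes exactly the first stated condition $(\wedge\vvs{\xi}) \wedge (\vee\cmpl{\vvs{\upsilon}}) \preceq \vee\cmpl{\vvs{\xi}}$. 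The part $A \cap \cmpl{B} \subseteq C$ is equivalent, via De Morgan, to $A \subseteq B \cup C$; rewriting $B \cup C = \mathcal{D}_{((\vee\cmpl{\vvs{\xi}}) \vee (\wedge\vvs{\upsilon}))\text{-unc}}$ by~\eref{eq:corrlatticeII.join} and applying~\eref{eq:DincII} gives exactly the second stated condition $\wedge\vvs{\xi} \preceq (\vee\cmpl{\vvs{\xi}}) \vee (\wedge\vvs{\upsilon})$. Thus the first inclusion is equivalent to the two conditions of the top row, and by the $\vvs{\xi}\leftrightarrow\vvs{\upsilon}$ symmetry the second inclusion is equivalent to the two conditions of the bottom row, completing the equivalence.

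I do not expect a serious obstacle: the argument is Boolean set algebra followed by a dictionary translation through~\eref{eq:corrlatticeII.meet},~\eref{eq:corrlatticeII.join} and~\eref{eq:DincII}. The only points demanding care are bookkeeping ones. First, I must check that the combined elements $(\wedge\vvs{\xi}) \wedge (\vee\cmpl{\vvs{\upsilon}})$ and $(\vee\cmpl{\vvs{\xi}}) \vee (\wedge\vvs{\upsilon})$, together with their counterparts from the other inclusion, indeed lie in $P_\text{II}$, so that the meet and join identities apply; this holds because $P_\text{II}$ is a lattice and each of $\wedge\vvs{\xi}$, $\vee\cmpl{\vvs{\xi}}$ already lies in $P_\text{II}$, as noted after~\eref{eq:easy}. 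Second, I must keep the two inclusions straight so that the four resulting conditions are paired correctly into the top and bottom rows of the statement.
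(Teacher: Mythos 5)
Your proposal is correct and follows essentially the same route as the paper's own proof: reduce both classes via~\eref{eq:easy}, split the equality into two mutual inclusions, convert each to emptiness statements, and translate through~\eref{eq:corrlatticeII.meet},~\eref{eq:corrlatticeII.join} and the order isomorphism~\eref{eq:DincII}, with the $\vvs{\xi}\leftrightarrow\vvs{\upsilon}$ symmetry supplying the second row. Your splitting of each inclusion into two sub-inclusions is just a slightly more modular bookkeeping of the paper's De Morgan--distributivity expansion, and your pairing of the four conditions matches the statement exactly.
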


\begin{proof}
This can be proven by standard set theory.\\
$\mathcal{C}_{\vvs{\xi}\text{-unc}}=\mathcal{C}_{\vvs{\upsilon}\text{-unc}}$
if and only if\\
$\mathcal{C}_{\vvs{\xi}\text{-unc}}\subseteq\mathcal{C}_{\vvs{\upsilon}\text{-unc}}$ and
$\mathcal{C}_{\vvs{\xi}\text{-unc}}\supseteq\mathcal{C}_{\vvs{\upsilon}\text{-unc}}$,
if and only if\\
$\mathcal{C}_{\vvs{\xi}\text{-unc}}\cap \cmpl{\mathcal{C}_{\vvs{\upsilon}\text{-unc}}}=\emptyset$ and
$\cmpl{\mathcal{C}_{\vvs{\xi}\text{-unc}}}\cap \mathcal{C}_{\vvs{\upsilon}\text{-unc}}=\emptyset$. \\
Using~\eref{eq:easy} (based on the definition~\eref{eq:Cunc}), De Morgan's law and the distributivity,
we end up with that the above is equivalent to \\
$\bigl(
\cmpl{\mathcal{D}_{\vee\cmpl{\vvs{\xi}}\text{-unc}}} \cap 
      \mathcal{D}_{\wedge\vvs{\xi}\text{-unc}} \cap
      \mathcal{D}_{\vee\cmpl{\vvs{\upsilon}}\text{-unc}} \bigr)\cup\bigl( 
\cmpl{\mathcal{D}_{\vee\cmpl{\vvs{\xi}}\text{-unc}}} \cap
\cmpl{\mathcal{D}_{\wedge\vvs{\upsilon}\text{-unc}}} \cap
      \mathcal{D}_{\wedge\vvs{\xi}\text{-unc}} \bigr)=\emptyset$
and \\
$\bigl(
\cmpl{\mathcal{D}_{\vee\cmpl{\vvs{\upsilon}}\text{-unc}}} \cap 
      \mathcal{D}_{\wedge\vvs{\upsilon}\text{-unc}} \cap
      \mathcal{D}_{\vee\cmpl{\vvs{\xi}}\text{-unc}} \bigr)\cup\bigl( 
\cmpl{\mathcal{D}_{\vee\cmpl{\vvs{\upsilon}}\text{-unc}}} \cap
\cmpl{\mathcal{D}_{\wedge\vvs{\xi}\text{-unc}}} \cap
      \mathcal{D}_{\wedge\vvs{\upsilon}\text{-unc}} \bigr)=\emptyset$.\\
Using De Morgan's law,~\eref{eq:corrlatticeII.join} and~\eref{eq:corrlatticeII.meet},
and that $A\cup B=\emptyset\inlineiff (A=\emptyset\;\text{and}\;B=\emptyset)$,
this holds if and only if \\
$\cmpl{\mathcal{D}_{\vee\cmpl{\vvs{\xi}}\text{-unc}}} \cap
\mathcal{D}_{(\wedge\vvs{\xi})\wedge(\vee\cmpl{\vvs{\upsilon}})\text{-unc}}=\emptyset$ and
$\cmpl{ \mathcal{D}_{(\vee\cmpl{\vvs{\xi}})\vee(\wedge\vvs{\upsilon})\text{-unc}}} \cap
\mathcal{D}_{\wedge\vvs{\xi}\text{-unc}}=\emptyset$ and \\
$\cmpl{\mathcal{D}_{\vee\cmpl{\vvs{\upsilon}}\text{-unc}}} \cap
\mathcal{D}_{(\wedge\vvs{\upsilon})\wedge(\vee\cmpl{\vvs{\xi}})\text{-unc}}=\emptyset$ and
$\cmpl{\mathcal{D}_{(\vee\cmpl{\vvs{\upsilon}})\vee(\wedge\vvs{\xi})\text{-unc}}} \cap
\mathcal{D}_{\wedge\vvs{\upsilon}\text{-unc}}=\emptyset$.\\
Now, after using that 
$B\subseteq A\inlineiff \cmpl{A}\cap B =\emptyset$, \eref{eq:DincII} completes the proof.
\end{proof}

Note that the conditions in Proposition~\ref{prop:unique}
are weaker than the emptiness conditions 
$\wedge\vvs{\xi}\preceq \vee\cmpl{\vvs{\xi}}$ and 
$\wedge\vvs{\upsilon}\preceq \vee\cmpl{\vvs{\upsilon}}$ 
by Proposition~\ref{prop:exist},
and express the interrelation of $\vvs{\xi}$ and $\vvs{\upsilon}$.

\section{The structure of the correlation classes: examples}
\label{sec:corrclassxmpl}

In this section,
applying the results of the previous section,
we elaborate the structure of the classification for some important choices of $P_\text{II*}$,
namely, 
for the \emph{finest classification} (section~\ref{sec:corrclassxmpl.finest}),
for \emph{chain-based classifications} (section~\ref{sec:corrclassxmpl.chkPP}), 
 specially for \emph{$k$-partitionability} and \emph{$k$-producibility classifications,}
and for the classification based on the \emph{atoms of the correlation properties} (section~\ref{sec:corrclassxmpl.ach}).

\subsection{Finest classification}
\label{sec:corrclassxmpl.finest}

First, consider the finest classification, when $P_\text{II*} = P_\text{II}$.
We show that the structure of the correlation classes is isomorphic to the dual of $P_\text{I}$.

\begin{lem}
\label{lem:Cfines0}
Let $P_\text{II*} = P_\text{II}$, then, 
for a filter $\vvs{\xi}\in P_\text{III*}$,
the class $\mathcal{C}_{\vvs{\xi}\text{-unc}}\neq\emptyset$
if and only if
$\vvs{\xi}=\upset\{\wedge\vvs{\xi}\}$ and
$\cmpl{\vvs{\xi}}=\downset\{\vee\cmpl{\vvs{\xi}}\}$.
\end{lem}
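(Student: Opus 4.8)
The plan is to reduce everything to Proposition~\ref{prop:exist} together with Lemmas~\ref{lem:suff} and~\ref{lem:suffd}, exploiting the fact that for the finest classification the intersections with $P_\text{II*}$ are vacuous. Indeed, since $P_\text{II*}=P_\text{II}$ is a finite (hence complete) lattice, both $\wedge\vvs{\xi}$ and $\vee\cmpl{\vvs{\xi}}$ lie in $P_\text{II}$, and the principal up-set $\upset\{\wedge\vvs{\xi}\}$ and down-set $\downset\{\vee\cmpl{\vvs{\xi}}\}$ are subsets of $P_\text{II}$; therefore $\upset\{\wedge\vvs{\xi}\}\cap P_\text{II*}=\upset\{\wedge\vvs{\xi}\}$ and $\downset\{\vee\cmpl{\vvs{\xi}}\}\cap P_\text{II*}=\downset\{\vee\cmpl{\vvs{\xi}}\}$. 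This is the only special feature of the finest classification that I would use.

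For the \emph{``only if''} direction I would start from $\mathcal{C}_{\vvs{\xi}\text{-unc}}\neq\emptyset$, which by Proposition~\ref{prop:exist} is equivalent to $\wedge\vvs{\xi}\not\preceq\vee\cmpl{\vvs{\xi}}$. Feeding this hypothesis into Lemma~\ref{lem:suff} gives $\upset\{\wedge\vvs{\xi}\}\cap P_\text{II*}=\vvs{\xi}$, i.e.\ $\upset\{\wedge\vvs{\xi}\}=\vvs{\xi}$ by the previous paragraph, and feeding it into Lemma~\ref{lem:suffd} gives $\downset\{\vee\cmpl{\vvs{\xi}}\}=\cmpl{\vvs{\xi}}$. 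These are exactly the two claimed equalities, so this direction is immediate.

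For the \emph{``if''} direction I would show that the two equalities force $\wedge\vvs{\xi}\not\preceq\vee\cmpl{\vvs{\xi}}$ and then invoke Proposition~\ref{prop:exist}. The key observation is a reflexivity argument: since $\wedge\vvs{\xi}\preceq\wedge\vvs{\xi}$ we have $\wedge\vvs{\xi}\in\upset\{\wedge\vvs{\xi}\}=\vvs{\xi}$ (so $\wedge\vvs{\xi}$ is in fact the minimum of $\vvs{\xi}$), and dually $\vee\cmpl{\vvs{\xi}}\in\downset\{\vee\cmpl{\vvs{\xi}}\}=\cmpl{\vvs{\xi}}$. Now suppose, for contradiction, that $\wedge\vvs{\xi}\preceq\vee\cmpl{\vvs{\xi}}$; then $\wedge\vvs{\xi}\in\downset\{\vee\cmpl{\vvs{\xi}}\}=\cmpl{\vvs{\xi}}$, which contradicts $\wedge\vvs{\xi}\in\vvs{\xi}$ because $\vvs{\xi}\cap\cmpl{\vvs{\xi}}=\emptyset$. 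Hence $\wedge\vvs{\xi}\not\preceq\vee\cmpl{\vvs{\xi}}$, and Proposition~\ref{prop:exist} yields $\mathcal{C}_{\vvs{\xi}\text{-unc}}\neq\emptyset$.

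I do not expect a genuine obstacle: the result is essentially a clean specialization of the general machinery to $P_\text{II*}=P_\text{II}$. The only points that need care are bookkeeping ones, namely confirming that $\wedge\vvs{\xi},\vee\cmpl{\vvs{\xi}}\in P_\text{II}$ so that their membership in the principal up/down-sets is meaningful, and keeping straight which implication (Lemmas~\ref{lem:suff}/\ref{lem:suffd} for ``only if'', the disjointness contradiction for ``if'') supplies which direction of the equivalence.
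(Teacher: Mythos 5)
Your proof is correct and takes essentially the same route as the paper: the \emph{``only if''} direction is exactly the paper's argument (Proposition~\ref{prop:exist} feeding into Lemma~\ref{lem:suff} and Lemma~\ref{lem:suffd}, with the intersections with $P_\text{II*}=P_\text{II}$ trivialized). For the \emph{``if''} direction the paper simply notes $\upset\{\wedge\vvs{\xi}\}\cap\downset\{\vee\cmpl{\vvs{\xi}}\}=\vvs{\xi}\cap\cmpl{\vvs{\xi}}=\emptyset$ and cites Lemma~\ref{lem:mm}; your reflexivity-and-disjointness contradiction is precisely the relevant half of Lemma~\ref{lem:mm}'s own proof unwound inline, so the two arguments coincide in substance.
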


\begin{proof}
To see the \textit{``only if''} implication, 
$\vvs{\xi}=\upset\{\wedge\vvs{\xi}\}\cap P_\text{II*}=\upset\{\wedge\vvs{\xi}\}$ and
$\cmpl{\vvs{\xi}}=\downset\{\vee\cmpl{\vvs{\xi}}\}\cap P_\text{II*}=\downset\{\vee\cmpl{\vvs{\xi}}\}$
by Proposition~\ref{prop:exist}, Lemma~\ref{lem:suff} and Lemma~\ref{lem:suffd}.\\
To see the \textit{``if''} implication,
we have $\upset\{\wedge\vvs{\xi}\}\cap\downset\{\vee\cmpl{\vvs{\xi}}\}=\vvs{\xi}\cap\cmpl{\vvs{\xi}}=\emptyset$,
then Lemma~\ref{lem:mm} and Proposition~\ref{prop:exist} lead to the claim.
\end{proof}

\begin{prop}
\label{prop:Cfines}
Let $P_\text{II*} = P_\text{II}$, then, 
for a filter $\vvs{\xi}\in P_\text{III*}$,
the class $\mathcal{C}_{\vvs{\xi}\text{-unc}}\neq\emptyset$
if and only if $\exists\xi\in P_\text{I}$ such that $\vvs{\xi}=\upset\{\downset\{\xi\}\}$.
\end{prop}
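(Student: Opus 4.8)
The plan is to invoke Lemma~\ref{lem:Cfines0}, which for $P_\text{II*}=P_\text{II}$ already reduces nonemptiness of $\mathcal{C}_{\vvs{\xi}\text{-unc}}$ to two order-theoretic conditions in the lattice $P_\text{II}=\mathcal{O}_\downarrow(P_\text{I})$: that $\vvs{\xi}=\upset\{\wedge\vvs{\xi}\}$ is a principal filter, and that $\cmpl{\vvs{\xi}}=\downset\{\vee\cmpl{\vvs{\xi}}\}$ is a principal ideal. It then suffices to show that these two conditions together hold exactly when $\wedge\vvs{\xi}=\downset\{\xi\}$ for some $\xi\in P_\text{I}$, since the first condition then rewrites $\vvs{\xi}=\upset\{\downset\{\xi\}\}$. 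The governing observation is that the map $\xi\mapsto\downset\{\xi\}$ identifies $P_\text{I}$ with the down-sets of $P_\text{I}$ having a \emph{single} maximal element, and I would track maximal elements throughout.

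For the \emph{``if''} direction I would assume $\vvs{\xi}=\upset\{\downset\{\xi\}\}$ and verify both conditions. As $\downset\{\xi\}$ is the least element of this principal filter, it is its greatest lower bound, so $\wedge\vvs{\xi}=\downset\{\xi\}$ and the first condition is immediate. For the second, since every $\vs{\upsilon}\in P_\text{II}$ is a down-set we have $\downset\{\xi\}\subseteq\vs{\upsilon}\inlineiff\xi\in\vs{\upsilon}$, whence $\cmpl{\vvs{\xi}}=\bigsset{\vs{\upsilon}\in P_\text{II}}{\xi\notin\vs{\upsilon}}$; its largest member is $P_\text{I}\setminus\upset\{\xi\}$, so $\cmpl{\vvs{\xi}}$ is the principal ideal generated by it, i.e.\ $\cmpl{\vvs{\xi}}=\downset\{\vee\cmpl{\vvs{\xi}}\}$. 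The sole degeneracy is $\xi=\bot$, giving $\vvs{\xi}=P_\text{II}$ and $\cmpl{\vvs{\xi}}=\emptyset$; there I would check nonemptiness directly, as $\mathcal{C}_{\vvs{\xi}\text{-unc}}=\mathcal{D}_{\{\bot\}\text{-unc}}\neq\emptyset$.

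For the \emph{``only if''} direction I would again use Lemma~\ref{lem:Cfines0}: if $\vvs{\xi}=P_\text{II}$ we are done with $\xi=\bot$, and otherwise $\cmpl{\vvs{\xi}}\neq\emptyset$, $\vvs{\xi}=\upset\{\vs{\zeta}\}$ with $\vs{\zeta}:=\wedge\vvs{\xi}\in P_\text{II}$, and $\cmpl{\vvs{\xi}}$ is a principal ideal. Writing $M$ for the antichain of maximal elements of the down-set $\vs{\zeta}$, so $\vs{\zeta}\subseteq\vs{\upsilon}\inlineiff M\subseteq\vs{\upsilon}$, I would compute $\cmpl{\vvs{\xi}}=\bigcup_{\eta\in M}\bigsset{\vs{\upsilon}\in P_\text{II}}{\eta\notin\vs{\upsilon}}$, a union of principal ideals with maxima $P_\text{I}\setminus\upset\{\eta\}$. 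Because distinct members of $M$ are incomparable, these maxima are pairwise incomparable, so $\cmpl{\vvs{\xi}}$ has exactly $\abs{M}$ maximal elements; a principal ideal has only one, forcing $\abs{M}=1$ and hence $\vs{\zeta}=\downset\{\xi\}$.

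The hard part will be this last order-theoretic step, namely that the complement of a principal filter $\upset\{\vs{\zeta}\}$ in $\mathcal{O}_\downarrow(P_\text{I})$ is a principal ideal precisely when $\vs{\zeta}$ has a unique maximal element (equivalently, is join-irreducible in $P_\text{II}$ and so corresponds to a partition $\xi$). The care needed is in the bookkeeping of maximal elements and the incomparability argument, plus keeping the degenerate top case $\vs{\zeta}=\{\bot\}$ (where the complement is empty rather than a genuine principal ideal) separate; the rest is a direct translation through Lemma~\ref{lem:Cfines0}.
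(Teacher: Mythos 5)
Your proposal is correct and takes essentially the same route as the paper: it reduces the claim via Lemma~\ref{lem:Cfines0} and then characterizes the principal filters $\upset\{\wedge\vvs{\xi}\}$ whose complements are principal ideals as exactly those generated by $\downset\{\xi\}$, using the same key computation that the greatest element of $\bigsset{\vs{\upsilon}\in P_\text{II}}{\xi\notin\vs{\upsilon}}$ is $P_\text{I}\setminus\upset\{\xi\}=\cmpl{\upset\{\xi\}}$. The only (sound and harmless) deviations are in the ``only if'' step, where you argue directly by counting the $\abs{M}$ pairwise-incomparable maximal elements $P_\text{I}\setminus\upset\{\eta\}$, $\eta\in M$, of $\cmpl{\vvs{\xi}}$ instead of the paper's contrapositive exhibition of the witness $\downset M\in\downset\{\vee\cmpl{\vvs{\xi}}\}\setminus\cmpl{\vvs{\xi}}$, and in your explicit treatment of the degenerate case $\xi=\bot$ (i.e.\ $\vvs{\xi}=P_\text{II}$, $\cmpl{\vvs{\xi}}=\emptyset$), which the paper handles implicitly through its empty-join convention.
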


\begin{proof}
Proposition~\ref{prop:Cfines} can be reformulated by Lemma~\ref{lem:Cfines0}:
$\vvs{\xi}=\upset\{\wedge\vvs{\xi}\}$ and
$\cmpl{\vvs{\xi}}=\downset\{\vee\cmpl{\vvs{\xi}}\}$
if and only if
$\exists\xi\in P_\text{I}$ such that $\vvs{\xi}=\upset\{\downset\{\xi\}\}$.
This can be proven as follows.\\
To see the \textit{``if''} implication,
on the one hand,
we have that if $\vvs{\xi}=\upset\{\downset\{\xi\}\}$ for a $\xi\in P_\text{I}$,
then $\wedge\vvs{\xi}=\downset\{\xi\}$, so $\vvs{\xi}=\upset\{\wedge\vvs{\xi}\}$.
On the other hand, 
$\vvs{\xi}=\upset\{\downset\{\xi\}\}
=\sset{\vs{\xi}\in P_\text{II}}{\vs{\xi}\succeq\downset\{\xi\}}
=\sset{\vs{\xi}\in P_\text{II}}{\xi\in\vs{\xi}}$,
so its complement (with respect to $P_\text{II*}=P_\text{II}$)
is $\cmpl{\vvs{\xi}}
=\sset{\vs{\xi}'\in P_\text{II}}{\xi\notin\vs{\xi}'}$,
and we claim that
$\vee\cmpl{\vvs{\xi}}
=\cmpl{\upset\{\xi\}}
\equiv\sset{\xi'\in P_\text{I}}{\xi\not\preceq\xi'}$.
To see the $\supseteq$ inclusion,
we have that $\forall \xi'\in P_\text{I}$ which is $\xi'\not\succeq\xi$,
for the down-set $\vs{\xi}':=\downset\{\xi'\}$ we have
$\xi\notin\vs{\xi}'$, 
so $\vs{\xi}'\in\cmpl{\vvs{\xi}}$,
so $\xi'\in\vs{\xi}'\preceq\vee\cmpl{\vvs{\xi}}$.
To see the $\subseteq$ inclusion,
we use contraposition.
For all $\xi'\in P_\text{I}$ such that $\xi'\succeq\xi$,
every $\vs{\xi}'\in P_\text{II*}$ for which $\xi'\in\vs{\xi}'$ we also have $\xi\in\vs{\xi}'$,
because $\vs{\xi}'$ is a down-set,
so $\vs{\xi}'\notin\cmpl{\vvs{\xi}}$.
Because this holds for all such $\vs{\xi}'$, we have
$\xi'\notin\vee\cmpl{\vvs{\xi}}$.
Now, we have  
$\vee\cmpl{\vvs{\xi}}=\cmpl{\upset\{\xi\}}$,
and we have to prove that
$\cmpl{\vvs{\xi}}=\downset\{\vee\cmpl{\vvs{\xi}}\}$.
By definition, and the results for $\cmpl{\vvs{\xi}}$ and $\vee\cmpl{\vvs{\xi}}$ above, 
we have to prove the third equality in 
$\downset\{\vee\cmpl{\vvs{\xi}}\}
=\downset\{\cmpl{\upset\{\xi\}}\}
=\sset{\vs{\xi}'\in P_\text{II}}{\vs{\xi}'\preceq \cmpl{\upset\{\xi\}}}
=\sset{\vs{\xi}'\in P_\text{II}}{\xi\notin\vs{\xi}'}
=\cmpl{\vvs{\xi}}$.
This can be seen as
$\vs{\xi}'\preceq \cmpl{\upset\{\xi\}}$
if and only if the up-sets
$\cmpl{\vs{\xi}'}\succeq \upset\{\xi\}$
if and only if 
$\xi\in\cmpl{\vs{\xi}'}$
if and only if 
$\xi\not\in\vs{\xi}'$.\\
To see the \textit{``only if''} implication,
we prove the contrapositive statement.
If $\vvs{\xi}\neq\upset\{\downset\{\xi\}\}$ for a $\xi\in P_\text{I}$,
then
we have two possibilities.
First, if $\vvs{\xi}\neq\upset\{\vs{\xi}\}$ for a $\vs{\xi}\in P_\text{II}$,
then $\wedge\vvs{\xi}\notin\vvs{\xi}$, then $\vvs{\xi}\neq \upset\{\wedge\vvs{\xi}\}$.
Second, 
although $\vvs{\xi}=\upset\{\vs{\xi}\}$,
we have $\vs{\xi}=\downset M\in P_\text{II}$, where
$M=\max(\vs{\xi})=\{\xi_1,\xi_2,\dots,\xi_m\}$ with $m\geq2$
(each down-set is the down-closure of its maximal elements).
In this case, although we have $\vvs{\xi}=\upset\{\vs{\xi}\}=\upset\{\wedge\vvs{\xi}\}$ by $\wedge\vvs{\xi}=\vs{\xi}$,
we will have $\downset\{\vee\cmpl{\vvs{\xi}}\}\neq\cmpl{\vvs{\xi}}$.
Indeed,
$\vvs{\xi}=\upset\{\wedge\vvs{\xi}\}=\upset\{\downset M\}
=\sset{\vs{\xi}\in P_\text{II}}{\downset M \preceq\vs{\xi}}
=\sset{\vs{\xi}\in P_\text{II}}{M \subseteq\vs{\xi}}$,
then
$\cmpl{\vvs{\xi}}=\sset{\vs{\xi}'\in P_\text{II}}{\exists \xi\in M:\xi\notin\vs{\xi}'}\not\ni \downset M$;
however, since $m\geq2$,
the union of such down-sets $\vs{\xi}'$ contains all $\xi\in M$,
that is, $M \subseteq\vee\cmpl{\vvs{\xi}}$,
so $\downset\{\vee\cmpl{\vvs{\xi}}\}=\sset{\vs{\xi}'\in P_\text{II}}{\vs{\xi}'\preceq\vee\cmpl{\vvs{\xi}}}\ni \downset M$,
leading to that $\downset\{\vee\cmpl{\vvs{\xi}}\}\neq\cmpl{\vvs{\xi}}$.
\end{proof}

\begin{prop}
\label{prop:Cfinesu}
Let $P_\text{II*} = P_\text{II}$, then,
for the partitions $\xi,\upsilon\in P_\text{I}$,
the classes $\mathcal{C}_{\upset\{\downset\{\xi\}\}\text{-unc}} = \mathcal{C}_{\upset\{\downset\{\upsilon\}\}\text{-unc}}$
if and only if $\xi=\upsilon$.
\end{prop}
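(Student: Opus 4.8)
The plan is to prove only the nontrivial \emph{``only if''} direction, since the \emph{``if''} direction is immediate: equal labels $\xi=\upsilon$ produce literally the same filter $\upset\{\downset\{\xi\}\}$ and hence the same set in~\eref{eq:Cunc}. For the converse I would extract from the equal classes a single witness state and show that $\xi$ can be \emph{read off} from its correlation pattern, so that two labels sharing a class must coincide.

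First I would record the two ingredients already computed inside the proof of Proposition~\ref{prop:Cfines} for a label of the form $\vvs{\xi}=\upset\{\downset\{\xi\}\}$, namely $\wedge\vvs{\xi}=\downset\{\xi\}$ and $\vee\cmpl{\vvs{\xi}}=\cmpl{\upset\{\xi\}}=\sset{\eta\in P_\text{I}}{\xi\not\preceq\eta}$. Substituting these into the simplified class form~\eref{eq:easy} gives $\mathcal{C}_{\upset\{\downset\{\xi\}\}\text{-unc}}=\cmpl{\mathcal{D}_{\cmpl{\upset\{\xi\}}\text{-unc}}}\cap\mathcal{D}_{\downset\{\xi\}\text{-unc}}$. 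Using the definition~\eref{eq:DuncII} together with the Level~I monotonicity~\eref{eq:DincI}, I would collapse the second factor, $\mathcal{D}_{\downset\{\xi\}\text{-unc}}=\bigcup_{\eta\preceq\xi}\mathcal{D}_{\eta\text{-unc}}=\mathcal{D}_{\xi\text{-unc}}$, and expand the first, $\mathcal{D}_{\cmpl{\upset\{\xi\}}\text{-unc}}=\bigcup_{\eta:\,\xi\not\preceq\eta}\mathcal{D}_{\eta\text{-unc}}$. This yields the clean characterisation: a state in $\mathcal{C}_{\upset\{\downset\{\xi\}\}\text{-unc}}$ is $\xi$-uncorrelated and is $\eta$-correlated for every $\eta$ with $\xi\not\preceq\eta$.

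With this characterisation in hand, the \emph{``only if''} direction is forced by antisymmetry of $\preceq$. By Proposition~\ref{prop:Cfines} both classes are nonempty, so if $\mathcal{C}_{\upset\{\downset\{\xi\}\}\text{-unc}}=\mathcal{C}_{\upset\{\downset\{\upsilon\}\}\text{-unc}}$ I may choose a single $\varrho$ in the common set. Membership in the $\xi$-class makes $\varrho$ $\xi$-uncorrelated. Suppose now $\upsilon\not\preceq\xi$; then $\xi$ is itself one of the partitions $\eta$ with $\upsilon\not\preceq\eta$, so membership of $\varrho$ in the $\upsilon$-class forces $\varrho$ to be $\xi$-correlated, contradicting the previous sentence. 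Hence $\upsilon\preceq\xi$, and the symmetric argument (exchanging the roles of $\xi$ and $\upsilon$) gives $\xi\preceq\upsilon$; antisymmetry then delivers $\xi=\upsilon$.

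The only genuine content lies in the characterisation of the third paragraph; once it is established, uniqueness is automatic. The step I would be most careful about is the nonemptiness input: a \emph{common} witness $\varrho$ exists precisely because Proposition~\ref{prop:Cfines} guarantees both classes are inhabited, and the contradiction argument would be vacuous otherwise. I would also make sure to quote the two collapses exactly as justified above, especially the identity $\vee\cmpl{\vvs{\upsilon}}=\cmpl{\upset\{\upsilon\}}$ proved inside Proposition~\ref{prop:Cfines}, since the whole contradiction hinges on $\xi\in\cmpl{\upset\{\upsilon\}}$ being equivalent to $\upsilon\not\preceq\xi$.
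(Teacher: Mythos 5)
Your proof is correct, but it takes a genuinely different route from the paper's. The paper proves the \emph{``only if''} direction purely order-theoretically: it substitutes the quantities $\wedge\vvs{\xi}=\downset\{\xi\}$ and $\vee\cmpl{\vvs{\xi}}=\cmpl{\upset\{\xi\}}$ (computed in the proof of Proposition~\ref{prop:Cfines}) into the general uniqueness conditions of Proposition~\ref{prop:unique}, and extracts $\xi\preceq\upsilon$ from the condition $\downset\{\xi\}\preceq\cmpl{\upset\{\xi\}}\vee\downset\{\upsilon\}$ via $\xi\in\downset\{\xi\}$ and $\xi\notin\cmpl{\upset\{\xi\}}$, then symmetrically $\upsilon\preceq\xi$. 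You bypass Proposition~\ref{prop:unique} entirely: you turn~\eref{eq:easy} into a pointwise membership criterion ($\varrho$ is $\xi$-uncorrelated and $\eta$-correlated for every $\eta$ with $\xi\not\preceq\eta$, using $\mathcal{D}_{\downset\{\xi\}\text{-unc}}=\mathcal{D}_{\xi\text{-unc}}$ and the expansion of $\mathcal{D}_{\cmpl{\upset\{\xi\}}\text{-unc}}$ via~\eref{eq:DuncII}), then argue by contradiction with a common witness state. Both arguments rest on the same computation $\vee\cmpl{\vvs{\xi}}=\cmpl{\upset\{\xi\}}$ from Proposition~\ref{prop:Cfines}, and your steps all check out, including the edge case $\xi=\bot$ where the set $\sset{\eta}{\xi\not\preceq\eta}$ is empty. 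The trade-off: your argument needs nonemptiness as an extra input (correctly sourced from Proposition~\ref{prop:Cfines} --- without it the witness argument would be vacuous, as you yourself note), whereas the paper's lattice computation does not invoke nonemptiness at all; on the other hand, your route is more elementary and yields as a by-product the explicit description of the class that the paper only states informally afterwards (the strictly $\upset\{\downset\{\xi\}\}$-uncorrelated states). The paper's template also has the advantage of uniformity: the same substitute-into-Proposition~\ref{prop:unique} pattern is reused verbatim for chains (Proposition~\ref{prop:Cchainu}) and for the antichain of atoms (Proposition~\ref{prop:Catomsu}), while your witness-state argument would have to be re-derived case by case.
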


\begin{proof}
The \textit{``if''} implication is obvious, to see the \textit{``only if''} implication,
we have in Proposition~\ref{prop:Cfines} that if 
$\vvs{\xi}=\upset\{\downset\{\xi\}\}$ and
$\vvs{\upsilon}=\upset\{\downset\{\upsilon\}\}$ 
for  $\xi,\upsilon \in P_\text{I}$,
then
$\wedge\vvs{\xi}=\downset\{\xi\}$, $\vee\cmpl{\vvs{\xi}}=\cmpl{\upset\{\xi\}}$,
$\wedge\vvs{\upsilon}=\downset\{\upsilon\}$, $\vee\cmpl{\vvs{\upsilon}}=\cmpl{\upset\{\upsilon\}}$,
which can be used in the conditions in Proposition~\ref{prop:unique}.
For example, the top-right one is then
$\downset\{\xi\}\preceq\cmpl{\upset\{\xi\}}\vee\downset\{\upsilon\}$,
which, 
since $\xi\in\downset\{\xi\}$,
tells us that
$\xi\in \cmpl{\upset\{\xi\}}\vee\downset\{\upsilon\}$.
Since $\xi\notin\cmpl{\upset\{\xi\}}$,
we have that $\xi\in\downset\{\upsilon\}$,
that is, $\xi\preceq\upsilon$.
It can be seen similarly (from, for example, the lower right condition in Proposition~\ref{prop:unique})
that $\upsilon\preceq\xi$, leading to that $\xi=\upsilon$.
\end{proof}

In summary, we have that
the nonempty classes can be labelled by the \emph{principal filters}
generated by the \emph{principal ideals} of partitions uniquely.
So, contrary to the same case of entanglement,
we could actually skip Level II in this case;
however, it is needed in the general construction, for example,
in $k$-partitionability and $k'$-producibility based classifications.
It also follows that the number of the classes is the same as the number of the possible partitions,
$\abs{P_\text{I}}$, given by the Bell numbers~\cite{oeisA000110}.
The \emph{strictly $\upset\{\downset\{\xi\}\}$-uncorrelated states} are those,
which are $\xi$-uncorrelated, while correlated with respect to any finer partition;
and no other label is meaningful.
For example, for $n=3$ we have the five classes
$\mathcal{C}_{\upset\{\downset\{1|2|3\}\}} = \{\varrho_{1}\otimes\varrho_{2}\otimes\varrho_{3}\}$,
$\mathcal{C}_{\upset\{\downset\{ab|c\}\}} = \{\varrho_{ab}\otimes\varrho_{c}\}$,
$\mathcal{C}_{\upset\{\downset\{123\}\}} = \{\varrho_{123}\}$,
where, contrary to \eref{eq:DuncI}, the density operators $\varrho_X\in\mathcal{D}_X$ are not of product form,
and the formula is given for all choices of $a,b,c\in L$, such that $ab|c$ is a partition of $L$.
(Note that here we use a simplified notation for the partitions and subsystems, e.g.,
$12|3 = \{\{1,2\},\{3\}\}$.)
It is important to note here,
how simple the finest classification of correlations is ($1+3+1$ classes),
compared to the finest classification of entanglement ($1+18+1$ classes) \cite{Szalay-2015b}.
For $n=4$ we have the fifteen classes
$\mathcal{C}_{\upset\{\downset\{1|2|3|4\}\}} = \{\varrho_{1}\otimes\varrho_{2}\otimes\varrho_{3}\otimes\varrho_{4}\}$,
$\mathcal{C}_{\upset\{\downset\{ab|c|d\}\}} = \{\varrho_{ab}\otimes\varrho_{c}\otimes\varrho_{d}\}$,
$\mathcal{C}_{\upset\{\downset\{ab|cd\}\}} = \{\varrho_{ab}\otimes\varrho_{cd}\}$,
$\mathcal{C}_{\upset\{\downset\{abc|d\}\}} = \{\varrho_{abc}\otimes\varrho_{d}\}$,
$\mathcal{C}_{\upset\{\downset\{1234\}\}} = \{\varrho_{1234}\}$.

\subsection{Chains, \texorpdfstring{$k$}{k}-partitionability and \texorpdfstring{$k$}{k}-producibility}
\label{sec:corrclassxmpl.chkPP}

Second, consider the case when the classification is based
on properties which can be ordered totally.
Let $P_\text{II*}$ be a chain, that is,
$P_\text{II*}=\sset{\vs{\xi}_i}{i,j=1,2,\dots,\abs{P_\text{II*}},\; \vs{\xi}_i\preceq\vs{\xi}_j \inlineiff i\leq j}$.
We show that the structure
of the correlation classes is isomorphic to the dual of $P_\text{II*}$, 
so it also forms a chain.

\begin{prop}
\label{prop:Cchain}
Let $P_\text{II*}$ be a chain, then
the class $\mathcal{C}_{\vvs{\xi}\text{-unc}}\neq\emptyset$ for all filters $\vvs{\xi}\in P_\text{III*}$.
\end{prop}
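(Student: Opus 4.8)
The plan is to reduce the claim to the nonemptiness criterion of Proposition~\ref{prop:exist}, exploiting that in a chain every filter has a very rigid shape. First I would order the chain as $P_\text{II*}=\set{\vs{\xi}_1,\vs{\xi}_2,\dots,\vs{\xi}_m}$ with $\vs{\xi}_1\preceq\vs{\xi}_2\preceq\dots\preceq\vs{\xi}_m$, and observe that a nonempty up-set $\vvs{\xi}\in P_\text{III*}$ is necessarily a \emph{tail}, i.e.\ $\vvs{\xi}=\set{\vs{\xi}_j,\vs{\xi}_{j+1},\dots,\vs{\xi}_m}$ for some $1\leq j\leq m$: if $\vs{\xi}_i\in\vvs{\xi}$, then upward closedness forces $\vs{\xi}_{i'}\in\vvs{\xi}$ for all $i'\geq i$, so $\vvs{\xi}$ is determined by its least element $\vs{\xi}_j$. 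Consequently its complement is the initial segment $\cmpl{\vvs{\xi}}=\set{\vs{\xi}_1,\dots,\vs{\xi}_{j-1}}$.

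Next I would compute the two quantities entering Proposition~\ref{prop:exist}. Since $\vs{\xi}_j$ is a lower bound of $\set{\vs{\xi}_j,\dots,\vs{\xi}_m}$ that lies in the set itself, it is the greatest lower bound (taken in $P_\text{II}$), so $\wedge\vvs{\xi}=\vs{\xi}_j$; dually, when $j\geq 2$ the top element of $\cmpl{\vvs{\xi}}$ gives $\vee\cmpl{\vvs{\xi}}=\vs{\xi}_{j-1}$. Because the elements of a chain are pairwise distinct and totally ordered, $\vs{\xi}_{j-1}\prec\vs{\xi}_j$ is \emph{strict}, whence $\vs{\xi}_j\not\preceq\vs{\xi}_{j-1}$ (otherwise antisymmetry would force $\vs{\xi}_{j-1}=\vs{\xi}_j$). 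This is exactly $\wedge\vvs{\xi}\not\preceq\vee\cmpl{\vvs{\xi}}$, so Proposition~\ref{prop:exist} yields $\mathcal{C}_{\vvs{\xi}\text{-unc}}\neq\emptyset$ for every filter with $j\geq 2$.

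The step I expect to be the real obstacle is the boundary case $j=1$, i.e.\ $\vvs{\xi}=P_\text{II*}$, for which $\cmpl{\vvs{\xi}}=\emptyset$. Here Proposition~\ref{prop:exist} cannot be invoked verbatim, because its derivation via~\eref{eq:easy} rests on applying~\eref{eq:corrlatticeII.join} to the union $\bigcup_{\vs{\xi}'\in\cmpl{\vvs{\xi}}}\mathcal{D}_{\vs{\xi}'\text{-unc}}$, which degenerates when the index set is empty (the empty union is $\emptyset$, whereas every $\mathcal{D}_{\vs{\xi}\text{-unc}}$ is nonempty). I would therefore treat this class directly from~\eref{eq:Cunc}: the empty first intersection is the whole $\mathcal{D}_L$, so $\mathcal{C}_{P_\text{II*}\text{-unc}}=\bigcap_{\vs{\xi}\in P_\text{II*}}\mathcal{D}_{\vs{\xi}\text{-unc}}$, which by the inclusion hierarchy~\eref{eq:DincII} collapses to $\mathcal{D}_{\vs{\xi}_1\text{-unc}}$, the smallest state-set of the chain. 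This is nonempty: for instance it contains every fully product state, which is $\bot$-uncorrelated and hence $\vs{\xi}$-uncorrelated for all $\vs{\xi}\in P_\text{II}$, since $\bot\in\vs{\xi}$ for every nonempty down-set. Collecting the two cases gives $\mathcal{C}_{\vvs{\xi}\text{-unc}}\neq\emptyset$ for all $\vvs{\xi}\in P_\text{III*}$, as claimed.
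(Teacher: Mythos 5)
Your proof is correct, and in the main case it follows essentially the same route as the paper's own proof: you identify $\wedge\vvs{\xi}$ with the least element $\vs{\xi}_j$ of the filter and $\vee\cmpl{\vvs{\xi}}$ with the greatest element $\vs{\xi}_{j-1}$ of the complementary down-set, note that the comparison $\vs{\xi}_{j-1}\prec\vs{\xi}_j$ is strict, and invoke Proposition~\ref{prop:exist}. Where you genuinely add something is the boundary case $\vvs{\xi}=P_\text{II*}$: the paper's proof states that ``a down-set $\cmpl{\vvs{\xi}}$ of a chain has a unique maximal element'', which tacitly assumes $\cmpl{\vvs{\xi}}\neq\emptyset$ and is vacuous for the full filter. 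Your caution there is legitimate: the empty join of down-sets is $\emptyset\notin P_\text{II}$, and since every $\mathcal{D}_{\vs{\zeta}\text{-unc}}$ with $\vs{\zeta}\in P_\text{II}$ is nonempty, \eref{eq:easy} has no label in $P_\text{II}$ to attach to the empty union, so Proposition~\ref{prop:exist} does not apply verbatim. Your direct treatment closes this cleanly: by \eref{eq:Cunc} the empty first intersection gives $\mathcal{C}_{P_\text{II*}\text{-unc}}=\bigcap_{\vs{\xi}\in P_\text{II*}}\mathcal{D}_{\vs{\xi}\text{-unc}}$, which equals $\mathcal{D}_{\vs{\xi}_1\text{-unc}}$ by \eref{eq:DincII}, and this contains every fully product state, since every nonempty down-set of $P_\text{I}$ contains $\bot$. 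One remark: an alternative patch would be to extend Proposition~\ref{prop:exist} by the convention $\vee\emptyset:=\emptyset$ with $\mathcal{D}_{\emptyset\text{-unc}}:=\emptyset$ (consistent with the empty-union convention of \ref{appsec:posets.int}), under which $\wedge\vvs{\xi}\not\preceq\vee\emptyset$ holds automatically; but the convention must be chosen with care, because taking the empty join to be the bottom element $\{\bot\}$ of $P_\text{II}$ instead would give the wrong verdict whenever $\wedge\vvs{\xi}=\{\bot\}$ (e.g.\ for the full filter over the partitionability chain $P_{\text{II\,part}}$, whose class is the nonempty set of totally uncorrelated states). So your case split is not merely pedantic; it repairs a genuine, if minor, gap in the paper's argument.
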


\begin{proof}
An up-set $\vvs{\xi}$ of a chain $P_\text{II*}$ have a unique minimal element, $\min\vvs{\xi}=\{\vs{\xi}_\text{min}\}$, 
and then $\vs{\xi}_\text{min}=\wedge\vvs{\xi}$;
on the other hand, 
$\cmpl{\vvs{\xi}}$, the complement of the up-set $\vvs{\xi}$ is a down-set, 
and, similarly, 
a down-set $\cmpl{\vvs{\xi}}$ of a chain $P_\text{II*}$ have a unique maximal element, $\max\cmpl{\vvs{\xi}}=\{\vs{\xi}_\text{max}'\}$, 
and then $\vs{\xi}_\text{max}'=\vee\cmpl{\vvs{\xi}}$.
We also have $\vee\cmpl{\vvs{\xi}}=\vs{\xi}_\text{max}'\prec \vs{\xi}_\text{min}=\wedge\vvs{\xi}$,
since all pairs of elements in a chain $P_\text{II*}$ can be compared,
and $\vs{\xi}_\text{max}'\not\succeq\vs{\xi}_\text{min}$,
since 
in the other case $\vs{\xi}_\text{max}'$ would be contained in $\vvs{\xi}$, being an up-set.
Now, if
$\vee\cmpl{\vvs{\xi}}\prec \wedge\vvs{\xi}$,
then
$\vee\cmpl{\vvs{\xi}}\not\succeq \wedge\vvs{\xi}$,
and Proposition~\ref{prop:exist} leads to the claim.
\end{proof}

\begin{prop}
\label{prop:Cchainu}
Let $P_\text{II*}$ be a chain, then,
for the filters $\vvs{\xi},\vvs{\upsilon}\in P_\text{III*}$,
the classes
$\mathcal{C}_{\vvs{\xi}\text{-unc}}=\mathcal{C}_{\vvs{\upsilon}\text{-unc}}$
if and only if $\vvs{\xi}=\vvs{\upsilon}$.
\end{prop}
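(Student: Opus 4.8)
The plan is to prove Proposition~\ref{prop:Cchainu} by reducing it to the already-established finest-classification uniqueness result, or equivalently, by exploiting the chain structure directly. The \emph{``if''} direction is trivial, since identical labels obviously give identical classes. For the \emph{``only if''} direction, I would use the special structure of chains worked out in the proof of Proposition~\ref{prop:Cchain}: for a chain $P_\text{II*}$, every filter $\vvs{\xi}$ is a principal up-set determined by its unique minimal element $\vs{\xi}_\text{min}=\wedge\vvs{\xi}$, and its complement $\cmpl{\vvs{\xi}}$ is a down-set with unique maximal element $\vs{\xi}_\text{max}'=\vee\cmpl{\vvs{\xi}}$, which satisfies $\vee\cmpl{\vvs{\xi}}\prec\wedge\vvs{\xi}$. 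The key observation is that on a chain, the filter $\vvs{\xi}$ is \emph{completely determined} by its meet $\wedge\vvs{\xi}$, because $\vvs{\xi}=\upset\{\wedge\vvs{\xi}\}\cap P_\text{II*}$. So it suffices to show that $\mathcal{C}_{\vvs{\xi}\text{-unc}}=\mathcal{C}_{\vvs{\upsilon}\text{-unc}}$ forces $\wedge\vvs{\xi}=\wedge\vvs{\upsilon}$.

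The cleanest route is to invoke Proposition~\ref{prop:unique} and specialize its four conditions to the chain setting. Suppose both classes are nonempty (guaranteed by Proposition~\ref{prop:Cchain}). Writing $\vs{\xi}_\text{min}=\wedge\vvs{\xi}$, $\vs{\xi}_\text{max}'=\vee\cmpl{\vvs{\xi}}$, and similarly $\vs{\upsilon}_\text{min}$, $\vs{\upsilon}_\text{max}'$ for $\vvs{\upsilon}$, I would substitute these into the conditions of Proposition~\ref{prop:unique}. Since all four elements lie on a chain together with everything in $P_\text{II*}$, every meet and join reduces to a $\min$ or $\max$ of comparable elements. For instance, the top-left condition $(\wedge\vvs{\xi})\wedge(\vee\cmpl{\vvs{\upsilon}})\preceq\vee\cmpl{\vvs{\xi}}$ becomes $\min\{\vs{\xi}_\text{min},\vs{\upsilon}_\text{max}'\}\preceq\vs{\xi}_\text{max}'$, and using $\vs{\xi}_\text{max}'\prec\vs{\xi}_\text{min}$ this forces $\vs{\upsilon}_\text{max}'\preceq\vs{\xi}_\text{max}'$ (since if the minimum were $\vs{\xi}_\text{min}$ we would get $\vs{\xi}_\text{min}\preceq\vs{\xi}_\text{max}'$, contradicting strict inequality). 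By symmetry, the analogous third condition yields $\vs{\xi}_\text{max}'\preceq\vs{\upsilon}_\text{max}'$, so $\vee\cmpl{\vvs{\xi}}=\vee\cmpl{\vvs{\upsilon}}$. A parallel analysis of the two right-hand conditions, which read $\vs{\xi}_\text{min}\preceq\max\{\vs{\xi}_\text{max}',\vs{\upsilon}_\text{min}\}$ and its symmetric counterpart, forces $\wedge\vvs{\xi}=\wedge\vvs{\upsilon}$. Once $\wedge\vvs{\xi}=\wedge\vvs{\upsilon}$ is established, the remark following Lemma~\ref{lem:mm} (that $\wedge\vvs{\xi}=\wedge\vvs{\upsilon}$ implies $\vvs{\xi}=\vvs{\upsilon}$, via Lemma~\ref{lem:suff}) immediately gives $\vvs{\xi}=\vvs{\upsilon}$, completing the argument.

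I expect the main obstacle to be the careful case analysis in reducing the lattice conditions of Proposition~\ref{prop:unique} to the chain, specifically verifying that the combination of a $\min$-type condition with the strict ordering $\vee\cmpl{\vvs{\xi}}\prec\wedge\vvs{\xi}$ correctly eliminates the ``wrong'' branch of the minimum. One must handle the total comparability of $\vs{\xi}_\text{min}$, $\vs{\upsilon}_\text{min}$, $\vs{\xi}_\text{max}'$, $\vs{\upsilon}_\text{max}'$ cleanly without implicitly assuming which of $\wedge\vvs{\xi}$, $\wedge\vvs{\upsilon}$ is larger. An alternative, perhaps more transparent, route avoids Proposition~\ref{prop:unique} entirely: since on a chain each $\vvs{\xi}$ is a principal filter and Lemma~\ref{lem:suff} gives $\vvs{\xi}=\upset\{\wedge\vvs{\xi}\}\cap P_\text{II*}$ for nonempty classes, one can observe directly from \eref{eq:easy} that $\mathcal{C}_{\vvs{\xi}\text{-unc}}=\cmpl{\mathcal{D}_{(\vee\cmpl{\vvs{\xi}})\text{-unc}}}\cap\mathcal{D}_{(\wedge\vvs{\xi})\text{-unc}}$ pins down both $\wedge\vvs{\xi}$ and $\vee\cmpl{\vvs{\xi}}$ through the inclusion order \eref{eq:DincII}, whence equal classes force equal generators and hence equal filters. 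I would present whichever version compiles most smoothly, favoring the direct argument for brevity.
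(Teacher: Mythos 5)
Your proposal is correct and follows essentially the same route as the paper's own proof: it specializes the four conditions of Proposition~\ref{prop:unique} using the chain facts from the proof of Proposition~\ref{prop:Cchain} (unique minimal and maximal elements, $\vee\cmpl{\vvs{\xi}}\prec\wedge\vvs{\xi}$, meets and joins of comparable elements reducing to $\min$/$\max$), eliminates the branch contradicting the strict inequality to get $\wedge\vvs{\xi}=\wedge\vvs{\upsilon}$, and concludes $\vvs{\xi}=\vvs{\upsilon}$ from the principal-filter property. As a minor point in your favor, you correctly identify the join $\vs{\xi}_\text{max}'\vee\vs{\upsilon}_\text{min}$ as the \emph{maximum} of the two comparable elements, whereas the paper's proof misprints it as a minimum while arguing as if it were the maximum.
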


\begin{proof}
The \textit{``if''} implication is obvious, to see the \textit{``only if''} implication,
we have in Proposition~\ref{prop:Cchain} that, using the same notation,
$\vee\cmpl{\vvs{\xi}}=\vs{\xi}_\text{max}'\prec \vs{\xi}_\text{min}=\wedge\vvs{\xi}$, and
$\vee\cmpl{\vvs{\upsilon}}=\vs{\upsilon}_\text{max}'\prec \vs{\upsilon}_\text{min}=\wedge\vvs{\upsilon}$,
which can be used in the conditions in Proposition~\ref{prop:unique}.
For example, the top-right one is then
$\vs{\xi}_\text{min}\preceq \vs{\xi}_\text{max}'\vee \vs{\upsilon}_\text{min}$,
where the right-hand side is $\min\{\vs{\xi}_\text{max}',\vs{\upsilon}_\text{min}\}$,
since every pair of elements in a chain can be ordered.
Since $\vs{\xi}_\text{max}'\prec \vs{\xi}_\text{min}$,
the one remaining possibility on the right-hand side is $\vs{\upsilon}_\text{min}$,
leading to $\vs{\xi}_\text{min}\preceq \vs{\upsilon}_\text{min}$.
It can be seen similarly that
$\vs{\xi}_\text{min}\succeq \vs{\upsilon}_\text{min}$, 
leading to that $\vs{\xi}_\text{min}= \vs{\upsilon}_\text{min}$,
then $\vvs{\xi}=\vvs{\upsilon}$.
\end{proof}

Note that if $P_\text{II*}$ is a chain, then its up-sets in $P_\text{III*}$ form also a chain.
Then, in summary, we have that
the nonempty classes can be labelled by
all the \emph{principal filters} restricted to $P_\text{II*}$ uniquely.
It also follows that the number of the classes is the same as the number of the elements of the properties taken into account,
$\abs{P_\text{II*}}$.
Special cases are the partitionability and producibility classifications, when 
$P_\text{II*} = P_{\text{II\,part}} :=  \sset{\vs{\mu}_k}{k=1,2,\dots,n}$ and
$P_\text{II*} = P_{\text{II\,prod}} :=  \sset{\vs{\nu}_{k'}}{k'=1,2,\dots,n}$,
leading to the classes
of \emph{strictly $k$-partitionably} and \emph{strictly $k'$-producibly uncorrelated} states,
$\mathcal{C}_{k\text{-part\,unc}} := \mathcal{C}_{\upset\{\vs{\mu}_k\}\text{-unc}}$ and
$\mathcal{C}_{k'\text{-prod\,unc}}:= \mathcal{C}_{\upset\{\vs{\nu}_{k'}\}\text{-unc}}$, respectively.
In these cases we always have $n$ classes,
the class of \emph{genuine correlated states} is the class of
strictly $1$-partitionably, or equivalently, strictly $n$-producibly uncorrelated states;
while the class of \emph{totally uncorrelated states} is the class of
strictly $n$-partitionably, or equivalently, strictly $1$-producibly uncorrelated states.
(In general, there is no one-to-one correspondence between the partitionability and producibility correlations.)
For example, for $n=3$ we have 
$\mathcal{C}_{1\text{-part\,unc}} = \mathcal{C}_{3\text{-prod\,unc}} = \{\varrho_{123}\}$,
$\mathcal{C}_{2\text{-part\,unc}} = \mathcal{C}_{2\text{-prod\,unc}} = \{\varrho_{ab}\otimes\varrho_{c}\}$,
$\mathcal{C}_{3\text{-part\,unc}} = \mathcal{C}_{1\text{-prod\,unc}} = \{\varrho_{1}\otimes\varrho_{2}\otimes\varrho_{3}\}$,
with the notations used before.
For $n=4$, the two chains are different,
$\mathcal{C}_{1\text{-part\,unc}} = \mathcal{C}_{4\text{-prod\,unc}} = \{\varrho_{1234}\}$,
$\mathcal{C}_{2\text{-part\,unc}} = \{\varrho_{ab}\otimes\varrho_{cd}, \varrho_{abc}\otimes\varrho_{d}\}$,
$\mathcal{C}_{3\text{-prod\,unc}} = \{\varrho_{abc}\otimes\varrho_{d}\}$,
$\mathcal{C}_{3\text{-part\,unc}} = \{\varrho_{ab}\otimes\varrho_{c}\otimes\varrho_{d}\}$, 
$\mathcal{C}_{2\text{-prod\,unc}} = \{\varrho_{ab}\otimes\varrho_{c}\otimes\varrho_{d},\varrho_{ab}\otimes\varrho_{cd}\}$,
$\mathcal{C}_{4\text{-part\,unc}} = \mathcal{C}_{1\text{-prod\,unc}} = \{\varrho_{1}\otimes\varrho_{2}\otimes\varrho_{3}\otimes\varrho_{4}\}$.

\subsection{An antichain}
\label{sec:corrclassxmpl.ach}

Third, consider the case when the classification is based
on properties which cannot be ordered.
Let $P_\text{II*}$ be an antichain, that is,
$P_\text{II*}=\sset{\vs{\xi}_i}{i,j=1,2,\dots,\abs{P_\text{II*}},\; 
\vs{\xi}_i\not\preceq\vs{\xi}_j \inlineiff i\neq j}$.
Then every subset of this is automatically an up-set, so $P_\text{III*}=2^{P_\text{II*}}\setminus\{\emptyset\}$.
One cannot formulate a general result in this case,
as was done for chains,
Proposition~\ref{prop:exist} and Proposition~\ref{prop:unique} have to be checked for the filters $\vvs{\xi}\in P_\text{III*}$.
For at least one particular antichain, the antichain of the atoms of the correlation properties,
however, we can obtain the complete classification.

\begin{prop}
\label{prop:Catoms}
Let 
$P_\text{II*}=\sset{\downset\{\xi\}}{\xi\in P_\text{I}, \abs{\xi}=n-1}$, then,
for a filter $\vvs{\xi}\in P_\text{III*}$,
the class $\mathcal{C}_{\vvs{\xi}\text{-unc}}\neq\emptyset$ if and only if $\abs{\vvs{\xi}}=1$.
\end{prop}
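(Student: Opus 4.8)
The plan is to reduce everything to Proposition~\ref{prop:exist}, which says $\mathcal{C}_{\vvs{\xi}\text{-unc}}\neq\emptyset$ exactly when $\wedge\vvs{\xi}\not\preceq\vee\cmpl{\vvs{\xi}}$, and then to evaluate the two sides in $P_\text{II}$ explicitly. First I would note that the elements of $P_\text{II*}$ are the principal ideals $\downset\{\xi\}$ of the atoms $\xi\in P_\text{I}$ (the partitions with $\abs{\xi}=n-1$, i.e.\ one joined pair and otherwise singletons), that these are pairwise incomparable, so that every nonempty subset is a filter and $P_\text{III*}=2^{P_\text{II*}}\setminus\set{\emptyset}$. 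The only combinatorial input needed is that in $P_\text{I}$ the meet of two \emph{distinct} atoms is $\bot$: each atom has a single non-singleton block, of size two, so the common refinement of two different atoms has singleton blocks only.

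For $\abs{\vvs{\xi}}=1$, say $\vvs{\xi}=\set{\downset\{\xi\}}$, the meet is $\wedge\vvs{\xi}=\downset\{\xi\}$, while $\vee\cmpl{\vvs{\xi}}=\bigcup_{\eta\neq\xi}\downset\{\eta\}$ is the union of the principal ideals of the other atoms. I would then check $\xi\notin\vee\cmpl{\vvs{\xi}}$: membership would mean $\xi\preceq\eta$ for some atom $\eta\neq\xi$, but a refinement between two atoms preserves the block number $n-1$ and is therefore the identity, forcing $\xi=\eta$, a contradiction. Hence $\downset\{\xi\}\not\subseteq\vee\cmpl{\vvs{\xi}}$, i.e.\ $\wedge\vvs{\xi}\not\preceq\vee\cmpl{\vvs{\xi}}$, and Proposition~\ref{prop:exist} gives $\mathcal{C}_{\vvs{\xi}\text{-unc}}\neq\emptyset$. (Here $\cmpl{\vvs{\xi}}\neq\emptyset$ since $\abs{P_\text{II*}}=n(n-1)/2\geq2$ for $n\geq3$.)

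For $\abs{\vvs{\xi}}\geq2$, choose two distinct atoms $\downset\{\xi_1\},\downset\{\xi_2\}\in\vvs{\xi}$. Since the meet in $P_\text{II}$ is intersection of ideals, $\wedge\vvs{\xi}\subseteq\downset\{\xi_1\}\cap\downset\{\xi_2\}=\downset\{\xi_1\wedge\xi_2\}=\downset\{\bot\}=\set{\bot}$ by the meet computation above; as $\bot\in\wedge\vvs{\xi}$ this gives $\wedge\vvs{\xi}=\set{\bot}$, the bottom of $P_\text{II}$. The bottom lies below every element, so $\wedge\vvs{\xi}\preceq\vee\cmpl{\vvs{\xi}}$, and Proposition~\ref{prop:exist} yields $\mathcal{C}_{\vvs{\xi}\text{-unc}}=\emptyset$. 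This also disposes of the full antichain $\vvs{\xi}=P_\text{II*}$, provided one adopts the convention that the empty join $\vee\emptyset$ is the bottom $\set{\bot}$ of $P_\text{II}$.

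The two meet/join evaluations are routine once the distinction between the refinement order on $P_\text{I}$ and the inclusion order on $P_\text{II}$ is kept straight; I expect this bookkeeping, rather than any single hard step, to be the main pitfall, together with the boundary case $\cmpl{\vvs{\xi}}=\emptyset$. At that boundary the literal definition~\eref{eq:Cunc} and the reduced form~\eref{eq:easy} underlying Proposition~\ref{prop:exist} can disagree unless the empty-join convention is fixed, so I would state this convention explicitly and verify that with it the label $\vvs{\xi}=P_\text{II*}$ (which has $\abs{\vvs{\xi}}>1$) is correctly classified as empty.
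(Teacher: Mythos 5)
Your core argument is the paper's own proof, step for step: you reduce everything to Proposition~\ref{prop:exist}; for $\abs{\vvs{\xi}}=1$ you compute $\wedge\vvs{\xi}=\downset\{\xi\}\ni\xi$ while $\xi\notin\vee\cmpl{\vvs{\xi}}$ (the paper gets this from $\downset\{\xi'\}=\{\xi',\bot\}$, you from block-counting under refinement --- both are sound), and for $\abs{\vvs{\xi}}\geq2$ you compute $\wedge\vvs{\xi}=\{\bot\}$, the bottom of $P_\text{II}$, whence $\wedge\vvs{\xi}\preceq\vee\cmpl{\vvs{\xi}}$ whenever $\cmpl{\vvs{\xi}}\neq\emptyset$. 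Up to that point the proposal is correct and essentially identical to the paper.

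The genuine problem is your disposal of the boundary label $\vvs{\xi}=P_\text{II*}$, where $\cmpl{\vvs{\xi}}=\emptyset$. Adopting the lattice convention $\vee\emptyset=\{\bot\}$ does not make this label ``correctly classified as empty''; it makes Proposition~\ref{prop:exist} \emph{false} at this label, because \eref{eq:corrlatticeII.join} fails for the empty family under that convention ($\bigcup_{\vs{\xi}'\in\emptyset}\mathcal{D}_{\vs{\xi}'\text{-unc}}=\emptyset\neq\mathcal{D}_{\{\bot\}\text{-unc}}$), so \eref{eq:easy} no longer reproduces the definition \eref{eq:Cunc}. The class itself is fixed by \eref{eq:Cunc} and cannot be legislated by a convention: with $\cmpl{\vvs{\xi}}=\emptyset$ the negative intersection is vacuous, so by \eref{eq:corrlatticeII.meet}
\begin{equation*}
\mathcal{C}_{P_\text{II*}\text{-unc}}
=\bigcap_{\vs{\xi}\in P_\text{II*}}\mathcal{D}_{\vs{\xi}\text{-unc}}
=\mathcal{D}_{(\wedge P_\text{II*})\text{-unc}}
=\mathcal{D}_{\{\bot\}\text{-unc}},
\end{equation*}
the totally uncorrelated states $\varrho_1\otimes\cdots\otimes\varrho_n$, which is manifestly nonempty even though $\abs{\vvs{\xi}}={n\choose 2}\geq2$ for $n\geq3$. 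So your instinct that something is delicate at $\cmpl{\vvs{\xi}}=\emptyset$ was right, but the resolution goes the other way: the top filter is a genuine exception to the statement, not a case to be absorbed by convention. (The paper's own proof elides the same point: its phrase ``without the need for the calculation of $\cmpl{\vvs{\xi}}$'' tacitly presumes $\cmpl{\vvs{\xi}}\neq\emptyset$, since only then is $\vee\cmpl{\vvs{\xi}}$ an element of $P_\text{II}$ sitting above the bottom; compare the chain classification, where the full filter likewise has empty complement and its class --- the totally uncorrelated states --- is explicitly listed as nonempty.) The correct repair is to state the claim for $\vvs{\xi}\neq P_\text{II*}$, equivalently: $\mathcal{C}_{\vvs{\xi}\text{-unc}}\neq\emptyset$ if and only if $\abs{\vvs{\xi}}=1$ or $\vvs{\xi}=P_\text{II*}$, the extra class being exactly the totally uncorrelated states, which also changes the class count from ${n\choose 2}$ to ${n\choose 2}+1$.
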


\begin{proof}
To see the \textit{``if''} implication,
$\abs{\vvs{\xi}}=1$ for a $\vvs{\xi}\in P_\text{III*}$
means that $\vvs{\xi} = \upset\{\downset\{\xi\}\}\cap P_\text{II*}$ 
 for a $\xi\in P_\text{I}$.
Then $\wedge\vvs{\xi} = \downset\{\xi\}$, so $\xi\in\wedge\vvs{\xi}$.
On the other hand,
$\cmpl{\vvs{\xi}}=\upset\sset{\downset\{\xi'\} \in P_\text{II*}}{\xi'\neq\xi}\cap P_\text{II*}$,
so $\xi\notin\vee\cmpl{\vvs{\xi}}$, 
since $\xi\notin\downset\{\xi'\}$ for all $\xi'\neq\xi$,
since $\downset\{\xi_i'\}=\{\xi_i',\bot\}$
(where $\bot\in P_\text{I}$ is the finest partition, the \emph{bottom element} of $P_\text{I}$).
So we have that $\wedge\vvs{\xi}\not\preceq\vee\cmpl{\vvs{\xi}}$, 
then Proposition~\ref{prop:exist} leads to that $\mathcal{C}_{\vvs{\xi}\text{-unc}}\neq\emptyset$.\\
To see the \textit{``only if''} implication, we prove the contrapositive statement.
Let $\abs{\vvs{\xi}}\geq2$ for a $\vvs{\xi}\in P_\text{III*}$,
that is, 
 for some distinct partitions $\xi_1,\xi_2,\dots,\xi_m\in P_\text{I}$,
we have
$\vvs{\xi} = \upset\{\downset\{\xi_1\},\downset\{\xi_2\},\dots,\downset\{\xi_m\}\}\cap P_\text{II*}$ for $m=\abs{\vvs{\xi}}\geq2$.
Since $\downset\{\xi_i\}=\{\xi_i,\bot\}$, 
we have that $\wedge\vvs{\xi}=\{\bot\}$.
Since $\{\bot\}$ is the bottom element of $P_\text{II}$,
we have $\wedge\vvs{\xi}\preceq\vee\cmpl{\vvs{\xi}}$, without the need for the calculation of $\cmpl{\vvs{\xi}}$,
then Proposition~\ref{prop:exist} leads to that $\mathcal{C}_{\vvs{\xi}\text{-unc}}=\emptyset$.
\end{proof}

\begin{prop}
\label{prop:Catomsu}
Let $P_\text{II*}=\sset{\downset\{\xi\}}{\xi\in P_\text{I}, \abs{\xi}=n-1}$, then,
for the partitions $\xi,\upsilon\in P_\text{I}$ with $\abs{\xi}=\abs{\upsilon}=n-1$, 
the classes $\mathcal{C}_{\upset\{\downset\{\xi\}\}\text{-unc}} = \mathcal{C}_{\upset\{\downset\{\upsilon\}\}\text{-unc}}$
if and only if $\xi=\upsilon$.
\end{prop}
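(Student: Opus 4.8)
The \textit{``if''} implication is immediate, so the plan is to establish the \textit{``only if''} implication by invoking the uniqueness criterion Proposition~\ref{prop:unique}, in direct analogy with the proofs of Proposition~\ref{prop:Cfinesu} and Proposition~\ref{prop:Cchainu}. Concretely, I would assume $\mathcal{C}_{\upset\{\downset\{\xi\}\}\text{-unc}}=\mathcal{C}_{\upset\{\downset\{\upsilon\}\}\text{-unc}}$, write $\vvs{\xi}=\upset\{\downset\{\xi\}\}\cap P_\text{II*}$ and $\vvs{\upsilon}=\upset\{\downset\{\upsilon\}\}\cap P_\text{II*}$, and deduce $\xi=\upsilon$ from the four inequalities listed in Proposition~\ref{prop:unique}.

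The first step is to record the relevant meets and joins, reusing the computation already performed in the proof of Proposition~\ref{prop:Catoms}. Since $\abs{\xi}=n-1$, the only partitions refining $\xi$ are $\xi$ itself and the finest partition $\bot$, so $\downset\{\xi\}=\{\xi,\bot\}$ and therefore $\wedge\vvs{\xi}=\downset\{\xi\}$. As the join in $P_\text{II}$ is the union of ideals, $\vee\cmpl{\vvs{\xi}}=\bigcup_{\xi'\neq\xi}\downset\{\xi'\}=\sset{\xi'\in P_\text{I}}{\abs{\xi'}=n-1,\ \xi'\neq\xi}\cup\{\bot\}$, and the key point I would isolate is that $\xi\notin\vee\cmpl{\vvs{\xi}}$: every $\downset\{\xi'\}=\{\xi',\bot\}$ contains only $\xi'$ and $\bot$, both distinct from $\xi$ when $\xi'\neq\xi$ (here $\xi\neq\bot$ because $\abs{\xi}=n-1<n=\abs{\bot}$). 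The same facts hold for $\vvs{\upsilon}$.

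Substituting these into a single one of the four conditions then suffices. Taking the top-right inequality $\wedge\vvs{\xi}\preceq(\vee\cmpl{\vvs{\xi}})\vee(\wedge\vvs{\upsilon})$, which in terms of ideals reads $\downset\{\xi\}\subseteq\bigstruct{\vee\cmpl{\vvs{\xi}}}\cup\downset\{\upsilon\}$, I would use $\xi\in\downset\{\xi\}$ together with $\xi\notin\vee\cmpl{\vvs{\xi}}$ to force $\xi\in\downset\{\upsilon\}=\{\upsilon,\bot\}$; since $\xi\neq\bot$, this yields $\xi=\upsilon$ directly. (One may equally read off $\xi\preceq\upsilon$ and invoke that two atoms of $P_\text{I}$ coincide as soon as they are comparable, the symmetric bottom-right condition giving $\upsilon\preceq\xi$.)

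I do not expect any genuine obstacle: the argument is pure bookkeeping in the lattice $P_\text{II}$. The only point demanding care is the evaluation of $\vee\cmpl{\vvs{\xi}}$ as a union of the two-element ideals of the remaining atoms, and the attendant check that this union omits $\xi$ — precisely the feature of the atom antichain that collapses all four conditions of Proposition~\ref{prop:unique} to the single identity $\xi=\upsilon$.
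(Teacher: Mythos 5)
Your proof is correct and takes essentially the same route as the paper's: both invoke Proposition~\ref{prop:unique}, compute $\wedge\vvs{\xi}=\downset\{\xi\}=\{\xi,\bot\}$ and $\xi\notin\vee\cmpl{\vvs{\xi}}$ (as in Proposition~\ref{prop:Catoms}), and use the top-right condition to force $\xi\in\downset\{\upsilon\}=\{\upsilon,\bot\}$, hence $\xi=\upsilon$. Your explicit remark that $\xi\neq\bot$ because $\abs{\xi}=n-1<n=\abs{\bot}$ is a minor refinement the paper leaves implicit.
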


\begin{proof}
The \textit{``if''} implication is obvious, to see the \textit{``only if''} implication,
we have in Proposition~\ref{prop:Catoms} that if
$\vvs{\xi}=\upset\{\downset\{\xi\}\}\cap P_\text{II*}$ and
$\vvs{\upsilon}=\upset\{\downset\{\upsilon\}\}\cap P_\text{II*}$
for $\xi,\upsilon \in P_\text{I}$,
then 
$\wedge\vvs{\xi} = \downset\{\xi\}\ni\xi$ and
$\wedge\vvs{\upsilon} = \downset\{\upsilon\}\ni\upsilon$, while
$\xi\notin\vee\cmpl{\vvs{\xi}}$ and
$\upsilon\notin\vee\cmpl{\vvs{\upsilon}}$,
which can be used in the conditions in Proposition~\ref{prop:unique}.
For example, the top-right one 
takes the form $\downset\{\xi\} \preceq (\vee\cmpl{\vvs{\xi}}) \vee (\downset\{\upsilon\})$,
so,
since $\xi\in\downset\{\xi\}$ and $\xi\notin\vee\cmpl{\vvs{\xi}}$,
we have that $\xi\in\wedge\vvs{\upsilon}=\downset\{\upsilon\} =\{\upsilon,\bot\}$, 
leading to that $\xi=\upsilon$.
\end{proof}

Note that the antichain $P_\text{II*}$ we considered here 
is the antichain of the \emph{atoms} of the lattice $P_\text{II}$,
being the \emph{principal ideals} generated by the \emph{$(n-1)$-partitions}, being the atoms of $P_\text{I}$.
In the $(n-1)$-partitions the only non-singlepartite subsystem is bipartite,
the correlations given by these partitions can be considered ``elementary'' in some sense.
Then, in summary, we have that
the nonempty classes can be labelled by 
the \emph{principal filters} restricted to $P_\text{II*}$
generated by the \emph{principal ideals} of $(n-1)$-partitions uniquely.
It also follows that the number of the classes is ${n\choose 2}$.
For example, for $n=3$ we have the three classes
$\mathcal{C}_{\upset\{\downset\{ab|c\}\}} = \{\varrho_{ab}\otimes\varrho_{c}\}$,
for $n=4$ we have the six classes
$\mathcal{C}_{\upset\{\downset\{ab|c|d\}\}} = \{\varrho_{ab}\otimes\varrho_{c}\otimes\varrho_{d}\}$,
with the notations used before.
This classification does not cover the whole state space.
More useful would be to consider the classification, analoguous to this by duality,
based on the antichain of the
principal ideals generated by the \emph{bipartitions}
($P_\text{II*}=\sset{\downset\{\xi\}}{\xi\in P_\text{I}, \abs{\xi}=2}$),
being the \emph{coatoms} of $P_\text{I}$.
This cannot be done simply by duality,
because we have to consider down-sets in both cases,
they cannot be replaced with up-sets, which are the dual notions.
In this case one has to check Proposition~\ref{prop:exist} and Proposition~\ref{prop:unique}
for all filters $\vvs{\xi}\in P_\text{III*}$ one by one. 
%

\section{Summary, remarks and open questions}
\label{sec:Summ}

In this work,
we have considered the \emph{partial correlation classification}~\eref{eq:Cunc},
and we have given
necessary and sufficient conditions for the existence (Proposition~\ref{prop:exist}) 
and uniqueness (Proposition~\ref{prop:unique}) of the class of a given class-label.
The importance of the results,
and the reason for using the robust machinery,
is that all the possible partial correlation based classifications
can be described in this general way.
Particular cases we considered were 
the finest classification,
the classification based on chains in general (including $k$-partitionability and $k$-producibility),
and the classification based on the atoms of the correlation properties,
in which cases we could formulate the classification in an explicit manner.

For the \emph{partial entanglement classification}~\eref{eq:Csep},
such results cannot be obtained.
The reason for this is that
the lattice isomorphism~\eref{eq:corrlatticeII.join}-\eref{eq:corrlatticeII.meet}, 
which holds for the partial correlation,
does not hold for partial entanglement, 
we have only~\cite{Szalay-2015b}
\begin{equation}
\mathcal{D}_{\vs{\xi} \text{-sep}} \cup \mathcal{D}_{\vs{\xi}' \text{-sep}}
\subseteq \mathcal{D}_{(\vs{\xi}\vee\vs{\xi}')\text{-sep}}
\end{equation}
and
\begin{equation}
\mathcal{D}_{\vs{\xi} \text{-sep}} \cap \mathcal{D}_{\vs{\xi}' \text{-sep}}
\supseteq \mathcal{D}_{(\vs{\xi}\wedge\vs{\xi}')\text{-sep}}.
\end{equation}
It is still a conjecture that $\mathcal{C}_{\vvs{\xi}\text{-sep}}$ is nonempty
and unique for all $\vvs{\xi}\in P_\text{III*}$~\cite{Szalay-2015b}.
Note, however, that entanglement in \emph{pure states} is simply the correlation, so
our present results can be applied for the partial entanglement classification of pure states.

Note that,
although Level~II of the construction
is originally motivated by the need for the description of statistical mixtures of different product states~\eref{eq:DsepII}
in multipartite \emph{entanglement} theory~\cite{Szalay-2015b},
it is also meaningful when multipartite \emph{correlations} are considered~\cite{Szalay-2017}
(without mixtures~\eref{eq:DuncII}).
In the latter case, it describes the different \emph{possibilities} for productness:
taking the union of state spaces~\eref{eq:DuncII} expresses \emph{logical disjunction},
so using Level~II makes possible to
handle correlation and entanglement properties in an overall sense, without respect to a specific partition. 
This is why we identify Level~II as encoding
the \emph{aspects} or \emph{properties}
of \emph{partial correlation and entanglement.}

We mention that the corresponding (information-geometry based) 
\emph{correlation} and \emph{entanglement measures} are given 
for all $\xi$-correlation and $\xi$-entanglement (Level~I), and
for all $\vs{\xi}$-correlation and $\vs{\xi}$-entanglement (Level~II),
specially, for all $k$-partitionability and $k'$-producibility correlation and entanglement~\cite{Szalay-2015b,Szalay-2017}.
In a nutshell, these are the most natural generalizations of
the \emph{mutual information}~\cite{Petz-2008,Wilde-2013}, 
the \emph{entanglement entropy}~\cite{Bennett-1996a} and 
the \emph{entanglement of formation}~\cite{Bennett-1996b}
for the multipartite setting.
These are strong LO and LOCC monotones,
moreover, 
they show the same lattice structure as the partitions on Level I, $P_\text{I}$, 
and the partition ideals on Level II, $P_\text{II}$,
which is called \emph{multipartite monotonicity}~\cite{Szalay-2015b}.
For examples on the multipartite correlation measures,
evaluated for ground states of molecules, see \cite{Szalay-2017}.

\vspace{-8pt}
\ack
\vspace{-8pt}
Discussions with \emph{Mih\'aly M\'at\'e} are gratefully acknowledged.
This research was financially supported by
the {National Research, Development and Innovation Fund of Hungary}
within the \textit{Researcher-initiated Research Program} (project Nr:~NKFIH-K120569)
and within the \textit{Quantum Technology National Excellence Program} (project Nr:~2017-1.2.1-NKP-2017-00001),
and the {Hungarian Academy of Sciences}
within the \textit{J\'anos Bolyai Research Scholarship},
within the \textit{``Lend\"ulet'' Program}
and within the Czech-Hungarian Bilateral Mobility Grant (project no: P2015-102).

\section*{References}
\bibliographystyle{unsrt.bst}
\bibliography{CEclass}{}

\appendix
\section{Partially ordered sets}
\label{appsec:posets}

\begin{figure}\centering
\includegraphics{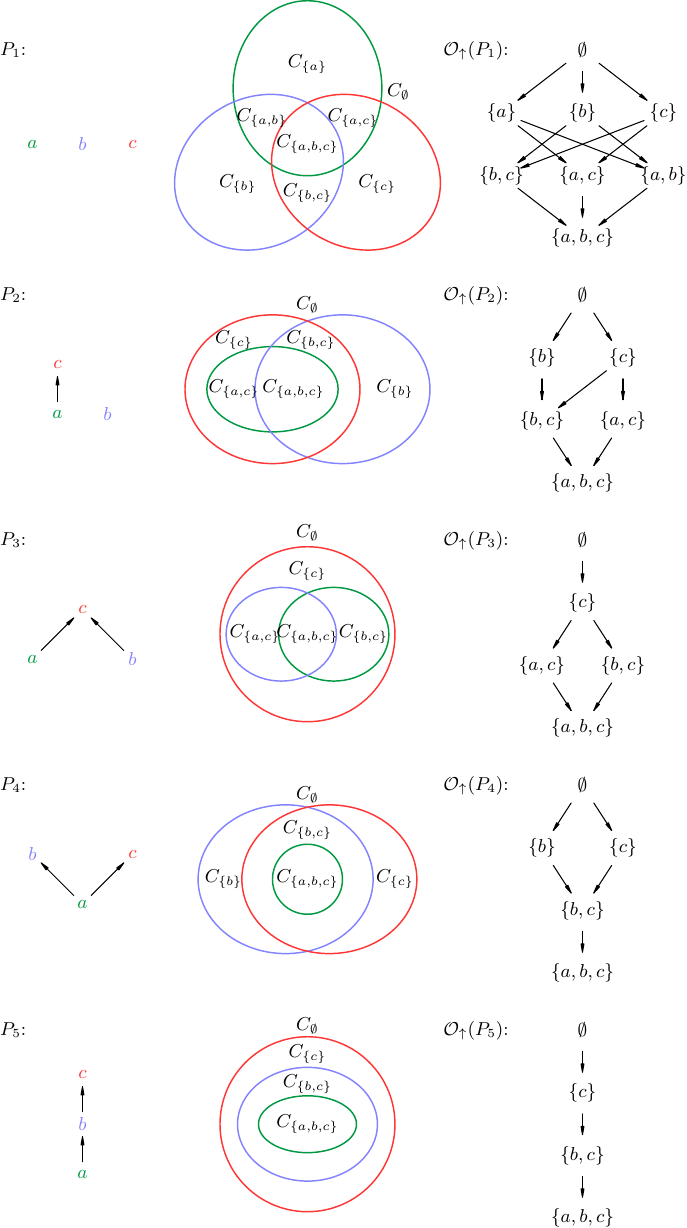}
\caption{
The poset of labels, $P$,
the Venn diagram of the inclusion of sets $A_x$ for the generic case,
and the lattice of the labels of the intersections (classes) being not empty by construction, $\mathcal{O}_\uparrow(P)$,
are shown
for the five possible posets (up to permutation) of three labels $P=\{a,b,c\}$~\cite{oeisA000112}.
}\label{fig:classVenn}
\end{figure}

\subsection{Elements in order theory}
\label{appsec:posets.defs}

Here we recall some elements in \emph{order theory}~\cite{Davey-2002,Roman-2008},
which are used in the main text.

A \emph{partially ordered set}, or \emph{poset}, $\struct{P,\preceq}$
is a set $P$ endowed with a \emph{partial order} $\preceq$,
which is a binary relation being
\emph{reflexive} ($\forall x\in P$: $x\preceq x$), 
\emph{antisymmetric} ($\forall x,y\in P$: if $x\preceq y$ and $y\preceq x$ then $y=x$)
and 
\emph{transitive} ($\forall x,y,z\in P$: if $x\preceq y$ and $y\preceq z$ then $x\preceq z$).
We consider finite posets ($\abs{P}<\infty$) only.
If every pair of elements can be related by $\preceq$, then the partial order is a total order,
and the poset is called a \emph{chain}.
If no pair of distinct elements can be related by $\preceq$, then the partial order is trivial,
and the poset is called an \emph{antichain}.

A poset $P$ may have a \emph{bottom element}, $\bot\in P$, and a \emph{top element}, $\top\in P$,
if $\forall x\in P$: $\bot\preceq x$, and $x\preceq\top$, respectively.
(If they exist, then they are unique, because of the antisymmetry of the ordering.)
If a (finite) poset $P$ has a bottom element,
then its \emph{atoms} are those $x$ elements for which 
$\forall y\in P$ if $y\prec x$ then $y=\bot$;
if a (finite) poset $P$ has a top element,
then its \emph{co-atoms} are those $x$ elements for which
$\forall y\in P$ if $x\prec y$ then $y=\top$.

The \emph{minimal} and \emph{maximal} elements of a subset $Q\subseteq P$ are
$\min Q = \sset{x\in Q}{(y\in Q\;\text{and}\;y\preceq x)\;\Rightarrow\;y=x}$, 
$\max Q = \sset{x\in Q}{(y\in Q\;\text{and}\;x\preceq y)\;\Rightarrow\;y=x}$.

A \emph{down-set}, or \emph{order ideal}, is a subset $Q\subseteq P$,
which is ``closed downwards'':
if $x\in Q$ and $y\preceq x$ then $y\in Q$.
An \emph{up-set}, or \emph{order filter}, is a subset $Q\subseteq P$,
which is ``closed upwards'':
if $x\in Q$ and $x\preceq y$ then $y\in Q$.
The sets of all down-sets and up-sets of $P$ 
are denoted with $\mathcal{O}_\downarrow(P)$ and $\mathcal{O}_\uparrow(P)$, respectively.

The \emph{down closure} and the \emph{up closure} of a subset $Q\subseteq P$ are
$\downset Q = \sset{x\in P}{\exists y\in Q: x\preceq y}$,
$\upset   Q = \sset{x\in P}{\exists y\in Q: y\preceq x}$,
which are a down-set (ideal) and an up-set (filter), respectively.
If $Q$ is a singleton, $\{x\}$,
then its down and up closures, $\downset\{x\}$ and $\upset\{x\}$,
are called \emph{principal ideal} and \emph{principal filter}, respectively. 

Elements $x,y\in P$ may have
\emph{greatest lower bound}, or \emph{meet}, $x\wedge y$
($x\wedge y \preceq x,y$, and $\forall z\in P$ if $z \preceq x,y$ then $z \preceq x\wedge y$)
and 
\emph{least upper bound}, or \emph{join}, $x\vee y$
($x,y \preceq x\vee y$, and $\forall z\in P$ if $x,y \preceq z$ then $x\vee y \preceq z$).
A (finite) poset $P$ is called a \emph{lattice},
if there exist meet and join for all pairs of its elements.
A (finite) lattice always has bottom and top elements.
Note that in the main text we use \emph{order ideals and filters,} which are just the down- and up-sets.
In the cases when the posets are lattices, lattice ideals and filters are considered automatically in the literature~\cite{Roman-2008}.
(\emph{Lattice ideals and filters} are nonempty down- and up-sets which inherit (finite) joins and meets.)
However, in our case, even when the posets considered are lattices,
our construction always uses order ideals and filters.

If we consider a power set,
the natural partial order $\preceq$ is the set inclusion $\subseteq$, 
then the meet $\wedge$ is the intersection $\cap$,
and the join $\vee$ is the union $\cup$.
For a poset $P$, $\mathcal{O}_\downarrow(P)$ and $\mathcal{O}_\uparrow(P)$ are lattices with respect to the inclusion.
If $P$ is a lattice, then also
$\mathcal{O}_\downarrow(P)\setminus\{\emptyset\}$ and 
$\mathcal{O}_\uparrow(P)\setminus\{\emptyset\}$ are lattices with respect to the inclusion.

\subsection{Intersections of sets}
\label{appsec:posets.int}

Let us have a set $A$,
and a finite number of its (different) subsets $A_x\in 2^A$,
labelled by elements $x\in P$ in a label set $P$.
All the possible intersections of the sets $A_x$ can be 
labelled by a subset $\vs{x}\in 2^P$ as
\begin{equation}
\label{aeq:CA}
C_{\vs{x}} := \bigcap_{x'\in\cmpl{\vs{x}}} \cmpl{A_{x'}}\cap\bigcap_{x\in\vs{x}}A_x \in 2^A,
\end{equation}
where the complement of the subset $A_x$ is in $2^A$, that is, $\cmpl{A_{x'}} = A\setminus A_{x'}$,
while the complement of the subset $\vs{x}$ is in $2^P$, that is, $\cmpl{\vs{x}} = P\setminus \vs{x}$.
(We use the convention that the empty intersection is the whole set $A$,
while the empty union is the empty set $\emptyset$.
Note that, for $\cmpl{A_{x'}}$, there does not necessarily exist $x\in P$ such that $A_x=\cmpl{A_{x'}}$.)

We would like to exploit the possible inclusions of the subsets $A_x$
in the labelling of the intersections.
In order to do this,
we endow the set $P$ of the labels with a partial order, based on the inclusion of the subsets $A_x$:
\begin{equation}
\label{aeq:Ainc}
\forall y,x\in P: \qquad y\preceq x \displayiff A_y\subseteq A_x.
\end{equation}

\begin{lem}
\label{lem:classint}
In the above setting, we have that
if $C_{\vs{x}} \neq \emptyset$ then $\vs{x}\in\mathcal{O}_\uparrow(P)$.
\end{lem}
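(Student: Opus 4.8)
The plan is to prove the contrapositive: assuming $\vs{x}\notin\mathcal{O}_\uparrow(P)$, I will show $C_{\vs{x}}=\emptyset$. By definition, $\vs{x}$ failing to be an up-set means there exist elements $y,x\in P$ with $x\in\vs{x}$ and $x\preceq y$, but $y\notin\vs{x}$, so that $y\in\cmpl{\vs{x}}$. The whole argument will hinge on exploiting this single witnessing pair through the inclusion hypothesis~\eref{aeq:Ainc}.

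First I would record what the partial order gives us for this pair: since $x\preceq y$, the defining equivalence~\eref{aeq:Ainc} yields $A_x\subseteq A_y$, equivalently $\cmpl{A_y}\cap A_x=\emptyset$ (using the standard fact $B\subseteq A\inlineiff\cmpl{A}\cap B=\emptyset$, which the paper has already invoked in the proof of Proposition~\ref{prop:exist}). Next I would locate both of these sets among the factors of the big intersection defining $C_{\vs{x}}$ in~\eref{aeq:CA}. Because $x\in\vs{x}$, the factor $A_x$ appears in $\bigcap_{x\in\vs{x}}A_x$; because $y\in\cmpl{\vs{x}}$, the factor $\cmpl{A_y}$ appears in $\bigcap_{x'\in\cmpl{\vs{x}}}\cmpl{A_{x'}}$. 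Hence $C_{\vs{x}}$ is contained in the intersection of just these two particular factors,
\begin{equation*}
C_{\vs{x}}\subseteq\cmpl{A_y}\cap A_x=\emptyset,
\end{equation*}
which forces $C_{\vs{x}}=\emptyset$ and completes the contrapositive.

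I expect this to be essentially immediate once the witnessing pair is fixed; there is no real obstacle, since everything reduces to the elementary identity $\cmpl{A}\cap B=\emptyset\inlineiff B\subseteq A$ together with the definition of the order~\eref{aeq:Ainc}. The only point requiring a little care is the bookkeeping: confirming that $A_x$ and $\cmpl{A_y}$ genuinely occur as factors in the two respective intersections in~\eref{aeq:CA} (which they do, precisely because $x\in\vs{x}$ and $y\in\cmpl{\vs{x}}$), so that the full intersection $C_{\vs{x}}$ is a subset of their pairwise intersection. This structural observation is exactly parallel to the ``$B\subseteq A\inlineiff\cmpl{A}\cap B=\emptyset$'' step used in Proposition~\ref{prop:exist}, and it is the natural analogue of that computation at the level of the abstract poset $P$ rather than the concrete lattice $P_\text{II*}$.
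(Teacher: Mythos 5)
Your proof is correct and follows exactly the paper's own argument: the contrapositive via a witnessing pair $x\in\vs{x}$, $y\in\cmpl{\vs{x}}$ with $x\preceq y$, the inclusion $A_x\subseteq A_y$ from~\eref{aeq:Ainc}, and the resulting empty factor $\cmpl{A_y}\cap A_x$ forcing $C_{\vs{x}}=\emptyset$. The only difference is that you spell out the bookkeeping (that both factors occur in~\eref{aeq:CA}) which the paper leaves implicit; there is nothing to correct.
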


\begin{proof}
This can be proven contrapositively:
If $\vs{x}$ is not an up-set ($\vs{x}\notin\mathcal{O}_\uparrow(P)$),
then there exists a pair of elements
$x\in\vs{x}$ and $x'\in \cmpl{\vs{x}}$ such that $x\preceq x'$,
then $A_x\subseteq A_{x'}$ by~\eref{aeq:Ainc},
then $\cmpl{A_{x'}}\cap A_x=\emptyset$,
then $C_{\vs{x}} = \emptyset$ by~\eref{aeq:CA}.
\end{proof}

\begin{lem}
\label{lem:classint2}
In the above setting, 
when $\bigcup_{x\in P}A_x=A$,
we have that
if $C_{\vs{x}} \neq \emptyset$ then $\vs{x}\in\mathcal{O}_\uparrow(P)\setminus\{\emptyset\}$.
\end{lem}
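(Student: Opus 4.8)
The plan is to obtain the statement by combining Lemma~\ref{lem:classint} with a single additional observation. Lemma~\ref{lem:classint} already establishes that $C_{\vs{x}}\neq\emptyset$ implies $\vs{x}\in\mathcal{O}_\uparrow(P)$, so the only genuinely new content here is nonemptiness, $\vs{x}\neq\emptyset$, which must come from the extra hypothesis $\bigcup_{x\in P}A_x=A$. I would prove this remaining piece contrapositively, by showing that $\vs{x}=\emptyset$ forces $C_{\vs{x}}=\emptyset$.

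To carry this out, I would simply evaluate $C_{\emptyset}$ directly from the definition~\eref{aeq:CA}. When $\vs{x}=\emptyset$, its complement taken in $P$ is all of $P$, so the intersection indexed by $\vs{x}$ is the empty intersection, equal to $A$ by the stated convention, while the intersection indexed by $\cmpl{\vs{x}}$ now runs over all of $P$. Hence $C_{\emptyset}=\bigcap_{x'\in P}\cmpl{A_{x'}}$.

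I would then rewrite this using De Morgan's law as $\cmpl{\bigcup_{x'\in P}A_{x'}}$, and invoke the covering hypothesis $\bigcup_{x\in P}A_x=A$ to see that it equals $\cmpl{A}=\emptyset$. This yields $C_{\emptyset}=\emptyset$, which is exactly the contrapositive of $C_{\vs{x}}\neq\emptyset\inlinethen\vs{x}\neq\emptyset$. Together with Lemma~\ref{lem:classint}, this gives $C_{\vs{x}}\neq\emptyset\inlinethen\vs{x}\in\mathcal{O}_\uparrow(P)\setminus\{\emptyset\}$, as required.

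There is essentially no obstacle in this argument; it is a short direct computation. The only point that needs care is the bookkeeping of the empty-intersection convention, together with the recognition that the covering hypothesis is precisely what collapses $\bigcap_{x'\in P}\cmpl{A_{x'}}$ to the empty set — this is the one place where the new hypothesis, absent from Lemma~\ref{lem:classint}, genuinely enters.
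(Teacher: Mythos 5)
Your proof is correct and takes essentially the same route as the paper's: both reduce the claim to Lemma~\ref{lem:classint} plus the contrapositive observation that $\vs{x}=\emptyset$ gives $C_{\emptyset}=\bigcap_{x'\in P}\cmpl{A_{x'}}=\cmpl{\bigcup_{x'\in P}A_{x'}}=\cmpl{A}=\emptyset$ by De Morgan's law and the covering hypothesis $\bigcup_{x\in P}A_x=A$. Your explicit bookkeeping of the empty-intersection convention is a point the paper leaves implicit, but the argument is the same.
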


\begin{proof}
This is Lemma~\ref{lem:classint}
together with that 
in the case of the stronger assumption we have that
if $C_{\vs{x}} \neq \emptyset$ then $\vs{x}\neq\emptyset$.
This, again, can be proven contrapositively:
Let $\vs{x}=\emptyset$, then 
$C_\emptyset = \bigcap_{x'\in P} \cmpl{A_{x'}}
= \cmpl{\bigcup_{x'\in P} A_{x'} } = \cmpl{A}=\emptyset$,
where~\eref{aeq:CA} and De Morgan's law were used.
\end{proof}

Examples can be seen in figure~\ref{fig:classVenn}
(note that the up-set lattice $\mathcal{O}_\uparrow(P)$ is drawn upside-down, 
which is intuitive in the case of correlation and entanglement theory).
Lemma~\ref{lem:classint} and Lemma~\ref{lem:classint2}
give \emph{necessary} condition for the nonemptiness of the classes.
(If the condition does not hold, then the class can be called \emph{empty by construction}~\cite{Szalay-2015b}.)
It is \emph{not sufficient}, as one can see, for example, in figure~\ref{fig:classVennX}:
in the case when $P$ is an anti-chain,
it is possible that 
$A_a\not\subseteq A_b$, $A_a\not\subseteq A_c$,
while $A_a\subseteq A_b\cup A_c$, 
leading to $C_{\{a\}}=\cmpl{A_b}\cap\cmpl{A_c}\cap A_a=\emptyset$
(\emph{empty not by construction}).

\begin{figure}\centering
\includegraphics{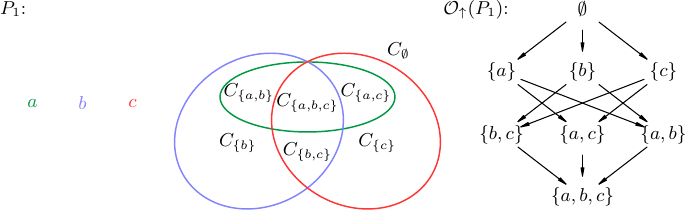}
\caption{
Example for a class ($C_{\{a\}}$), which is empty, but not by construction.
(Compare with the first row of figure~\ref{fig:classVenn})
}\label{fig:classVennX}
\end{figure}

Earlier version of these results was shown in~\cite{Szalay-2015b} 
in the special setting where it was used ($P=P_\text{II*}$).
Note that the present formulation is more general,
here $P$ does not have to be a lattice,
and the sets $A_x$ do not have to cover $A$ entirely ($\bigcup_{x\in P}A_x\subseteq A$).

\section{Multipartite quantum states}
\label{appsec:qstates}

\subsection{Order isomorphisms for Level~I-II}
\label{appsec:qstates.I-II}

\begin{proofl}{of~\eref{eq:DincI} and~\eref{eq:DincII}}
The \textit{first inclusion} in~\eref{eq:DincI},
$\upsilon\preceq\xi \inlineiff 
\mathcal{D}_{\upsilon\text{-unc}}\subseteq\mathcal{D}_{\xi\text{-unc}}$,
was proven in Appendix A.4 in~\cite{Szalay-2015b} 
for \emph{pure} $\xi$-separable (hence pure $\xi$-uncorrelated) states only.
For mixed $\xi$-uncorrelated states, a slight modification is needed.\\
To see the \textit{``only if''} implication,
let us have $\varrho\in\mathcal{D}_{\upsilon\text{-unc}}$, then 
$\varrho=\bigotimes_{Y\in\upsilon}\varrho_Y
=\bigotimes_{X\in\xi}\Bigl(\bigotimes_{Y\in\upsilon, Y\subseteq X}\varrho_Y\Bigr) \in\mathcal{D}_{\xi\text{-unc}}$,
where the \textit{first equality} is 
\eref{eq:DuncI};
and at the \textit{second equality}
we have used the assumption $\upsilon\preceq\xi$, 
which gives by definition that $\forall Y\in\upsilon$,  $\exists X\in\xi$ such that $Y\subseteq X$,
making possible to collect the states of subsystems $Y$ contained in a given $X$,
which can be done for all subsystems $X$.\\
To see the \textit{``if''} implication,
we prove the contrapositive statement,
$\upsilon\not\preceq\xi \inlinethen \mathcal{D}_{\upsilon\text{-unc}}\not\subseteq\mathcal{D}_{\xi\text{-unc}}$.
Let us have $\varrho\in\mathcal{D}_{\upsilon\text{-unc}}$, then, 
using the notation $\varrho_X=\Tr_{L,X}\varrho$, consider
$\bigotimes_{X\in\xi}\varrho_X
= \bigotimes_{X\in\xi}\Tr_{L,X}\varrho 
= \bigotimes_{X\in\xi}\Tr_{L,X}\bigotimes_{Y\in\upsilon}\varrho_Y
= \bigotimes_{X\in\xi}\bigotimes_{Y\in\upsilon}\Tr_{Y,X\cap Y}\varrho_Y
= \bigotimes_{Y\in\upsilon} \Bigl( \bigotimes_{X\in\xi}\Tr_{Y,X\cap Y}\varrho_Y \Bigr)
\neq \bigotimes_{Y\in\upsilon}\varrho_Y=\varrho$,
where at the \textit{second and the last equalities} we used the assumption that $\varrho\in\mathcal{D}_{\upsilon\text{-unc}}$
(we use the notation $\Tr_{X,X'}=\bigotimes_{i\in X\cap \cmpl{X'}}\Tr_{\mathcal{H}_i}:\Lin\mathcal{H}_X\to\Lin\mathcal{H}_{X'}$ for the partial trace,
when $X'\subseteq X$);
the \textit{third equality} can be checked 
by the decomposition of tensors into linear combination of elementary tensors,
and using the linearity of the partial trace and the tensor product;
the \textit{fourth equality} is just the associativity of the tensor product.
The \textit{nonequality} comes from the assumption that $\upsilon\not\preceq\xi$,
which gives that $\exists Y\in\upsilon$ for which $\forall X\in\xi$ we have $Y\not\subseteq X$,
then 
the term $\bigotimes_{X\in\xi}\Tr_{Y,X\cap Y}\varrho_Y\neq \varrho_Y$ for this $Y$,
if $\varrho_Y$ is not of the product form, which is an extra assumption, which can be fulfilled,
since $\dim\mathcal{H}_i>1$.\\
The \textit{second inclusion} in~\eref{eq:DincI},
$\upsilon\preceq\xi \inlineiff 
\mathcal{D}_{\upsilon\text{-sep}}\subseteq\mathcal{D}_{\xi\text{-sep}}$,
has already been proven in Appendix A.4 in~\cite{Szalay-2015b}.\\
The \textit{first inclusion} in~\eref{eq:DincII},
$\vs{\upsilon}\preceq\vs{\xi} \inlineiff 
\mathcal{D}_{\vs{\upsilon}\text{-unc}}\subseteq\mathcal{D}_{\vs{\xi}\text{-unc}}$,
was proven in Appendix A.8 in~\cite{Szalay-2015b} 
for \emph{pure} $\vs{\xi}$-separable (hence pure $\vs{\xi}$-uncorrelated) states only.
For mixed $\vs{\xi}$-uncorrelated states, the same steps can be applied.\\
The \textit{second inclusion} in~\eref{eq:DincII},
$\vs{\upsilon}\preceq\vs{\xi} \inlineiff 
\mathcal{D}_{\vs{\upsilon}\text{-sep}}\subseteq\mathcal{D}_{\vs{\xi}\text{-sep}},$
has already been proven in Appendix A.8 in~\cite{Szalay-2015b}.
\end{proofl}

Note that~\eref{eq:DincI} and~\eref{eq:DincII} immediately lead to that
$\upsilon=\xi$ if and only if $\mathcal{D}_{\upsilon\text{-unc}}=\mathcal{D}_{\xi\text{-unc}}, 
\mathcal{D}_{\upsilon\text{-sep}}=\mathcal{D}_{\xi\text{-sep}}$,
while
$\vs{\upsilon}=\vs{\xi}$ if and only if $\mathcal{D}_{\vs{\upsilon}\text{-unc}}=\mathcal{D}_{\vs{\xi}\text{-unc}}, 
\mathcal{D}_{\vs{\upsilon}\text{-sep}}=\mathcal{D}_{\vs{\xi}\text{-sep}}$,
since an order isomorphism is automatically bijective.

\end{document}